\documentclass[a4paper,reqno,12pt]{amsart}
\usepackage[a4paper,margin=1in]{geometry}
\usepackage{graphicx}
\usepackage{amsmath,amssymb,amsthm,bm}
\usepackage{hyperref}

\theoremstyle{plain}
\newtheorem{theorem}{Theorem}[section]
\newtheorem{lemma}[theorem]{Lemma}
\newtheorem{proposition}[theorem]{Proposition}

\newtheorem{corollary}[theorem]{Corollary}
\theoremstyle{definition}
\newtheorem{remark}[theorem]{Remark}
\numberwithin{equation}{section}
\title[Global branches of Stokes waves of variable period on stratified fluids]
{Global branches of Stokes waves of variable period on stratified fluids}
\author{Vladimir Kozlov$^a$$^b$}

\address{$^a$Department of Mathematics, Link\"oping University, Link\"oping, Sweden}
\address{$^b$ Euler International Mathematical Institute (EIMI), Saint Petersburg, Russia}

\begin{document}
	
\begin{abstract}
We consider stratified steady water waves in a two dimensional channel. Our subject is branches of Stokes waves, bifurcating from laminar flows. We assume that the mass flux and the Bernoulli constant are fixed and consider the period of the wave as a parameter, which can change its value along the branch. A new class of density and Bernoulli functions is presented, for which laminar flows generate global bifurcation branches.
 The laminar flows are not necessary unidirectional and we show that the bifurcation branch can bifurcate from the laminar flow with arbitrary large period.

% bifurcating from the laminar flows.
%Our main
%subject is laminar solutions and the dispersion equation. Our main result is a %description of laminar solutions with dispersion equation having solutions.
% The main
%assumption is formulated in terms the Dirichlet problem for the Frechet %derivative
%calculated at the laminar solutions.

\end{abstract}

\maketitle

\section{Introduction}

We consider stratified steady water waves in a two-dimensional channel.
We use the classical formulation of the problem based on the Euler equations.
The surface tension is neglected and the water motion can be rotational.
Here we construct global branches of water waves. We assume that the flux and the Bernoulli constant are fixed and consider the period as a bifurcation parameter. We describe global branches of solutions bifurcating from a laminar flow.
We present a class of vorticity and density functions, where you can pick up a laminar flow whose dispersion equation has an arbitrary, a-priori fixed frequency as a solution. This class contains laminar flows with counter-currents.

\subsection{Statement of the main result}\label{SOkt11a}

Our object of study is two-dimensional stratified steady water waves  traveling
with constant speed $c$ under the influence of gravity. To eliminate the dependence on
time we use a moving reference frame, where the fluid occupies a  domain
 $$
D=D_\xi=\{(x,y)\,:\,-d<y<\xi(x),\;x\in\Bbb R\}
$$
in the channel with  the flat bottom $B$ given by $y=-d$ and with the free surface of
the flow $S=S_\xi$ given by $y=\xi(x)$. The density of the fluid $\rho$, defined
 in $\overline{D}$, is assumed to be positive and not necessarily constant.
 To describe a water wave we use the velocity of the flow $(u,v)$, the pressure $P$
 and the density $\rho$. If $\rho$ is non-constant the fluid is referred to as stratified and it is
stably stratified if $\rho$ is non-decreasing with depth.

 Corresponding relations describing the model can be found in the paper Walsh \cite{Wal} (see also the survey \cite{HSTW}, Sect.3). We recall these
 relations for readers convenience:
\begin{equation}\label{Se17a}
u_x+v_y=0\;\;\mbox{in $D$}\;\;\mbox{(incompressibility)},
\end{equation}
\begin{equation}\label{M3a}
(u-c)\rho_x+v\rho_y=0\;\;\mbox{in $D$}\;\;\mbox{(the conservation of mass)}
\end{equation}
and the conservation of momentum
\begin{eqnarray}\label{Se17aa}
&&(u-c)u_x+vu_y+\frac{P_x}{\rho}=0\;\;\mbox{in $D$},\nonumber\\
&&(u-c)v_x+vv_y+\frac{P_y}{\rho}=-g\;\;\mbox{in $D$},
\end{eqnarray}
where
 $g$ is the gravitational constant.
The boundary conditions are
\begin{equation}\label{Sep18a}
v=(u-c)\xi_x\;\;\mbox{and}\;\;P=P_{atm}\;\;\mbox{on $S$},
\end{equation}
where $P_{atm}$ is the atmospheric pressure, and
\begin{equation}\label{Sep18aa}
v=0\;\;\mbox{on $B$}.
\end{equation}

The pseudostream function $\psi=\psi(x,y)$ is defined by
$$
\psi_x(x,y)=-\sqrt{\rho}v(x,y),\,\;\psi_y(x,y)=\sqrt{\rho}(u(x,y)-c).
$$
Then  equations (\ref{Se17a}) and  (\ref{M3a}) are satisfied if $\rho$ is a function of $\psi$.
In what follows we assume that $\rho=\rho(-\psi)$ and $\rho(p)$ is a positive function.
%\begin{equation}\label{Ma13aa}
%\rho_p(p)\leq 0.
%\end{equation}

The  relative pseudomass flux is defined by
$$
p_0=\int_{-d}^{\xi(x)}\sqrt{\rho}(u(x,y)-c)dY
$$
and it does not depend on $x$. We will assume that $p_0<0$.

The energy in the system
\begin{equation}\label{J25a}
E:=\frac{1}{2}(\psi_x^2+\psi_y^2)+ P+g\rho y
\end{equation}
is constant along the stream lines of $\psi$. This allows  to define
 the Bernoulli function
\begin{equation}\label{Ma13b}
\beta(\psi)=\frac{d E}{d\psi}.
\end{equation}
It can be verified that
\begin{equation}\label{Ma13ba}
\beta (\psi)=\Delta\psi-gy\rho_p(-\psi)=0\;\;\mbox{in $D$}.
\end{equation}
Boundary conditions for $\psi$ are
\begin{equation}\label{Ma13bb}
\psi(x,\xi(x))=0\;\;\mbox{and}\;\;\psi(x,-d)=-p_0
\end{equation}
together with the Bernoulli boundary condition
\begin{equation}\label{Ma13bc}
\frac{1}{2}|\nabla\psi|^2+g\rho(0)(\xi(x)+d)=R\;\;\mbox{for $y=\xi(x)$},
\end{equation}
which is obtained from (\ref{J25a}) by setting $y=\xi(x)$ and using that
$\psi(x,\xi(x))=0$. The constant $R$ is called the Bernoulli constant.

We set
\begin{equation}\label{Okt5a}
\omega(y,\psi)=-gy\rho'(-\psi)-\beta(\psi),
\end{equation}
where $\rho'(-\psi)=-\partial_\psi\rho(-\psi)$.
Then equation (\ref{Ma13ba}) takes the form
\begin{equation}\label{Okt5aa}
\Delta\psi+\omega(y,\psi)=0\;\;\mbox{in $D$}.
\end{equation}
In the formulation (\ref{Ma13ba})--(\ref{Ma13bc}), we assume that the constants $p_0$ and $R$ are given and the functions $\psi$ and $\xi$ are unknowns.

Our main assumption in our study concerns laminar solutions, i.e.
solutions of (\ref{Okt5aa}) depending only on $y$. A laminar solution $\Psi(y)$ satisfies
\begin{equation}\label{J18aa}
\Psi^{''}(y)+\omega(y,\Psi(y))=0\;\;\mbox{on $(-d,0)$}
\end{equation}
together with the boundary conditions
\begin{equation}\label{J18ab}
\Psi(0)=0\;\;\mbox{and}\;\;\Psi(-d)=-p_0,
\end{equation}
\begin{equation}\label{J18ac}
\frac{1}{2}|\Psi'(0)|^2+g\rho(0)d=R.
\end{equation}
Here $\Psi$ and $d>0$ are considered as unknowns.
We assume that $\rho$ and $\beta$  are $C^{2,\alpha}$ and $C^{1,\alpha}$ functions respectively, for a certain $\alpha\in (0,1)$, satisfying
\begin{equation}\label{F11a}
|\rho(p)|+|\beta(p)|\leq C_1(1+|p|)\;\;\mbox{and}\;\;|\rho(p_2)-\rho(p_1)|+|\beta(p_2)-\beta(p_1)|\leq C_1|p_2-p_1|.
\end{equation}

Our main assumption is the following

\noindent
{\bf Assumption (A)} The spectral problem
\begin{eqnarray}\label{J18c}
&&w^{''}(y)+\omega_p(y,\Psi(y))w+\mu w=0\;\;\mbox{on $(-d,0)$},\nonumber\\
&&w(-d)=w(0)=0
\end{eqnarray}
has only positive eigenvalues $\mu$ for each $d>0$, where $\Psi$ is a solution of the equation (\ref{J18aa}) with $\Psi(0)=0$. Here
\begin{equation}\label{Ma9a}
\omega_p((y,\Psi)=-\beta'(\Psi)+y\rho^{''}(-\Psi)
\end{equation}
i.e. it is the derivative of $\omega(y,p)$ with respect to $p$.

We consider the following branch of the laminar solutions $(\Psi(y;s),d(s))$, which solve the problem
\begin{eqnarray}\label{F19a}
&&\Psi^{''}(y)+\omega(y,\Psi(y))=0\;\;\mbox{on $(-\infty,0]$},\nonumber\\
&&\Psi(0)=0,\;\;\Psi'(0)=s,
\end{eqnarray}
together with
\begin{equation}\label{A10a}
\Psi(d(s))=-p_0.
\end{equation}
Due to (\ref{F11a}) the problem (\ref{F19a}) is uniquely solvable and we denote its solution by $\Psi(y;s)$. We take the maximal interval $(-\infty,s_*)$ such that equation (\ref{A10a}) holds and $\Psi_y(d(s);s)<0$ for $s<s_*$. Clearly the pair $(\Psi(y;s),d(s))$ solves the problem (\ref{J18aa})-(\ref{J18ac}) with
\begin{equation}\label{A17a}
R={\mathcal R}(s):=\frac{1}{2}\Psi_y(0;s)^2+g\rho(0)d(s).
\end{equation}

The dispersion equation is expressed through the solvability of the spectral problem (see Sect. \ref{SA15a})
\begin{eqnarray}\label{F21a}
&&(\partial_y^2-\tau^2)u+\omega_p(y,\Psi)u=0
\;\;\mbox{in $(-d,0)$},\nonumber\\
&&\Psi_yu_y-\kappa u=0\;\;\mbox{for $y=0$},\nonumber\\
&& u(-d)=0,
\end{eqnarray}
where
\begin{equation}\label{F21aa}
\kappa=\frac{\Psi_y(0)\Psi_{yy}(0)+g\rho(0)}{\Psi_y(0)}.
\end{equation}
We are looking for a nontrivial solution $u$ and a positive number $\tau$. So we may assume that $u(0)=1$. Having this in mind, we introduce the function $\gamma(y;\tau)$, which is the solution of
\begin{eqnarray}\label{F21ab}
&&(\partial_y^2-\tau^2)\gamma+\omega_p(y,\Psi)\gamma=0\;\;\mbox{in $(-d,0)$},\nonumber\\
&&\gamma(0)=1,\nonumber\\
&& \gamma(-d)=0.
\end{eqnarray}
From the boundary condition for $y=0$ in (\ref{F21a}), we obtain the following scalar  equation for $\tau$:
\begin{equation}\label{F21ac}
\sigma(\tau;s):=(\Psi_y\gamma_y-\kappa )|_{y=0}=0,
\end{equation}
which is called the dispersion equation.
The corresponding eigenfunction is $\gamma(y;\tau)$. The function $\sigma$ is strongly increasing with respect to $\tau$ and tends to infinity as $\tau\to \infty$. Therefore existence of a positive root to (\ref{F21ac}) is equivalent to
$\sigma(0;s)<0$. In Sect. \ref{SA14a} we prove that
\begin{equation}\label{F13ab}
\sigma(0;s)=\frac{-s-g\rho_0\dot{d}(s)-s\dot{d}V_y(0)}{s\dot{d}(s)}=
\frac{-{\mathcal R}'(s)-s\dot{d}V_y(0)}{s\dot{d}(s)},
\end{equation}
where $V$ is the solution of the problem
\begin{eqnarray}\label{F13a}
&&V^{''}(y)+\omega_p(y,\Psi(y;s))V=\rho_p(-\Psi)\;\;\mbox{on $(-d,0)$}\nonumber\\
&&V(-d)=V(0)=0
\end{eqnarray}
and hence
\begin{equation}\label{A18a}
V_y(0)=\int_{-d}^0\gamma(z;0)\rho'(-\Psi(z;s))d\,z.
\end{equation}
\begin{proposition}
The dispersion equation (\ref{F21ac}) has a positive solution for a certain $s<0$ in the cases

(i) The density  $\rho$ is stably stratified, i.e. $\rho'(p)\leq 0$;

(ii) $s_*\geq 0$ or $s_*$ is a small negative number.

(iii) $g\rho_0+s_*V_y(0;s*)>0$.
\end{proposition}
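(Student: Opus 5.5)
The plan is to work directly from formula (\ref{F13ab}) and to show that $\sigma(0;s)<0$ for some admissible $s<0$. As noted before the proposition, this is equivalent to the dispersion equation (\ref{F21ac}) having a positive root. Dividing numerator and denominator in (\ref{F13ab}) gives
\begin{equation*}
\sigma(0;s)=-\frac{1}{\dot d(s)}+\frac{g\rho_0+sV_y(0;s)}{|s|},\qquad s<0 .
\end{equation*}
So I need to establish three facts: the sign of $\dot d$, its size, and a bound on $V_y(0;s)$.

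\textbf{Step 1 (sign of $\dot d$).} Put $\Phi=\partial_s\Psi(y;s)$. It solves the linearized equation $\Phi''+\omega_p(y,\Psi)\Phi=0$ with $\Phi(0)=0$ and $\Phi'(0)=1$. By Assumption (A) with $\mu=0$, $\Phi$ cannot vanish again on $[-d,0)$, since otherwise $0$ would be a Dirichlet eigenvalue on a shorter interval. Hence $\Phi<0$ on $[-d,0)$. Differentiating (\ref{A10a}) gives $\dot d=\Phi(-d)/\Psi_y(-d)$. On the branch $s<s_*$ we have $\Psi_y(-d)<0$, so $\dot d>0$ and the denominator $s\dot d$ in (\ref{F13ab}) is negative.

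\textbf{Step 2 (bound on $V_y(0)$).} Assumption (A) also yields the maximum principle for the problem (\ref{F21ab}) at $\tau=0$. Thus $0\le\gamma(\cdot;0)\le C$, with $C$ uniform when $d$ stays bounded, because $\omega_p$ is bounded by (\ref{F11a}). From (\ref{A18a}) I then get $|V_y(0;s)|\le C\,d(s)\sup|\rho'|$. In case (i), $\rho'\le 0$ gives the sign information $V_y(0)\le0$, so $g\rho_0+sV_y(0)\ge g\rho_0$. This fixes the sign but is not needed for the existence argument below.

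\textbf{Step 3 (asymptotics as $s\to-\infty$; main step).} Since $\omega$ has at most linear growth by (\ref{F11a}), a Gronwall argument on (\ref{F19a}) gives $\Psi(y;s)=sy+O(s y^2)$ uniformly for $|y|$ small. Hence $d(s)=|p_0|/|s|\,(1+o(1))$. In the same way, $\Phi(y)=y(1+o(1))$ and $\Psi_y(-d)=s(1+o(1))$, so $\dot d(s)=|p_0|/s^2\,(1+o(1))$. Substituting these into the display above, the first term behaves like $-s^2/|p_0|\to-\infty$. The second term is $O(1/|s|)$ by Step 2, since $d(s)\to0$. Therefore $\sigma(0;s)<0$ for all sufficiently negative $s$, and because $\sigma(\cdot;s)$ is increasing with limit $+\infty$, a positive root $\tau$ of (\ref{F21ac}) exists.

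This argument works for any $s<0$ far enough down the branch. Case (ii) only guarantees that the branch interval $(-\infty,s_*)$ contains such negative $s$. In case (iii), the quantity $g\rho_0+s_*V_y(0;s_*)$ controls the sign of $\sigma(0;s)$ near $s_*$, where $\Psi_y(-d)\to0$ and $\dot d\to\infty$. I would treat cases (i)--(iii) as additional information about where the root sits rather than as essential to existence.

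\textbf{Main obstacle.} The delicate point is the uniform asymptotics in Step 3, namely $\dot d\sim|p_0|/s^2$ together with $\Psi_y(-d)\sim s$. This requires linear growth control of $\omega$ and $\omega_p$ on the shrinking interval $[-d(s),0]$. I would handle it by rescaling $y=t/|s|$ and applying Gronwall to the rescaled ODEs for $\Psi/s$ and for $\Phi$.
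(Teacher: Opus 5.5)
Your argument is correct and is essentially the paper's own: the existence claim is the first assertion of Corollary~\ref{CM9a}, which the paper obtains as you do, from (\ref{F13ab}), the positivity $\dot d>0$ given by Proposition~\ref{PF20a} via Assumption (A), and the smallness of $d$, $\dot d$ and $V_y(0)$ as $s\to-\infty$. You are also right that this yields a root for every sufficiently negative $s$ without using (i)--(iii); in the paper those conditions serve the remaining assertions of Corollary~\ref{CM9a}, namely $\sigma(0;s)>0$ as $s\to s_*$ (using $\dot d\to\infty$) or for small negative $s$ when $s_*>0$, so that $\sigma(0;\cdot)$ changes sign along the branch and roots $\tau_*$ arbitrarily close to $0$ (arbitrarily long periods) become available.
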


In order to formulate the main theorem, let us introduce some functional spaces.
For $k=0,1,\ldots$, $\alpha\in(0,1)$ and $-\infty<a<b<\infty$,
we denote by $C^{k,\alpha}(\overline{D}_{a,b})$ and $C^{k,\alpha}([a,b])$
the H\"older spaces of functions in $\overline{D}_{a,b}=\{(x,y)\in \overline{D}
\,:\,a\leq x\leq b\}$ and $[a,b]$ respectively.
% By $C^{k,\alpha}_0(\overline{\Omega}_{a,b})$ we denote the subspace of
%functions in $C^{k,\alpha}(\overline{\Omega}_{a,b})$ vanishing for $y=-d$.

The space $C^{k,\alpha}_{\textrm{b}}(\overline{D})$ consists of functions
$u$ defined on $\overline{D}$
such that
$$
\sup_{a\in \Bbb R}||w||_{C^{k,\alpha}(\overline{D}_{a,a+1})}<\infty.
$$
 The subspace   $C^{k,\alpha}_{\Lambda}(\overline{D})$ in $C^{k,\alpha}_{\textrm{b}}(\overline{D})$
consists of $\Lambda$-periodic functions and the subspace $C^{k,\alpha}_{0,e,\Lambda}(\overline{D})$ consists of   $\Lambda$--periodic, even functions vanishing for $y=\xi(x)$.

Similarly, one can define the spaces $C^{k,\alpha}_{\textrm{b}}(\Bbb R)$,
$C^{k,\alpha}_{\Lambda}(\Bbb R)$ and
$C^{k,\alpha}_{e,\Lambda}(\Bbb R)$.

In the just introduced functional spaces we shall use the norms
$$
||u||_{C^{k,\alpha}_{\Lambda}(\overline{D})}=
||u||_{C^{k,\alpha}_{\Lambda}(\overline{D}_{-\Lambda/2,\Lambda/2})}.
$$
Similarly, we define the norms in other spaces of $\Lambda$-periodic functions.

We always assume that $\psi_y\neq 0$ on the free surface $S_\xi$.

\begin{theorem}\label{TA17b} Let the laminar flow $(\Psi,d)$ is chosen such that, $s<0$ and the dispersion  equation (\ref{F21ac}) has a solution $\tau_*>0$ with the corresponding eigenfunction $\gamma(y;\tau_*)$, and let $R$ be given by (\ref{A17a}).
Then there exist $\varepsilon > 0$ and a mapping\footnote{Here in what follows we will use the notation $F[t]$, $F[t](x,y)$, where $t$ is a parameter on the branch of solutions and $(x,y)$ are variables in the domain}
\begin{equation}\label{A17aa}
(-\varepsilon,\varepsilon)\ni t\rightarrow (\psi[t],\xi[t],\Lambda(t)))\in C^{2,\alpha}_{0,e,\Lambda(t)}(\overline{D_\xi})\times C^{2,\alpha}_{e,\Lambda(t)}(\Bbb R)\times\Bbb R_+,
\end{equation}
where $(\Psi[t],\xi[t])$ is a $\Lambda(t)$--periodic, even solution to (\ref{J18aa})--(\ref{J18ac}) such that the vector function
\begin{equation}\label{A17ab}
(-\varepsilon,\varepsilon)\ni t\rightarrow (\lambda(t),\xi[t](\lambda^{-1}\hat{x})))\in \Bbb R_+\times C^{2,\alpha}_{e,\Lambda_*}(\Bbb R),
\end{equation}
where
\begin{equation}\label{A17b}
\Lambda_*=\frac{2\pi}{\tau_*},\:\:\lambda(t)=\frac{\Lambda_*}{\Lambda(t)},\;\;\hat{x}=\lambda x,
\end{equation}
is continuous together with its first derivative with respect to $t$. Furthermore,
\begin{equation}\label{A17ac}
\Lambda(0)=\Lambda_*,\;\;\xi[t](x)=-\frac{t}{\Psi_{*y}(0)}
\cos(\tau_*\lambda(t)x)+o(|t|)
\end{equation}
and
\begin{equation}\label{A17acz}
\psi[t](x,y)=\Psi_*(y)+t\gamma(y;\tau_*)\cos(\tau_*\lambda(t)x)
+o(|t|).
\end{equation}

\end{theorem}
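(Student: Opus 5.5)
We prove the theorem by local Crandall--Rabinowitz bifurcation from a simple eigenvalue, with the rescaled wavenumber $\lambda$ as the bifurcation parameter.

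\textbf{Step 1: flatten and rescale.} We introduce the rescaled variable $\hat x=\lambda x$, so that every $\Lambda$-periodic problem becomes $\Lambda_*$-periodic. We flatten the unknown domain by the change of variables
\[
y\mapsto \hat y=\frac{d\,(y+d)}{\xi+d}-d ,
\]
or, alternatively, by the partial hodograph (Dubreil-Jacotin) transformation $q=\hat x$, $p=-\psi$; the latter is available whenever $\psi_y\neq0$. In the stretched variables we write
\[
\psi=\Psi_*+w,\qquad \xi=\eta .
\]
The problem (\ref{Okt5aa}), (\ref{Ma13bb}), (\ref{Ma13bc}) then becomes an operator equation
\[
F(w,\eta;\lambda)=0,\qquad F:X\times\Bbb R_+\to Y .
\]
Here $X$ consists of even $\Lambda_*$-periodic pairs in $C^{2,\alpha}$ with the Dirichlet conditions built in, and $Y$ consists of the corresponding $C^{0,\alpha}\times C^{1,\alpha}$ data. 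The map $F$ is $C^1$, using $\rho\in C^{2,\alpha}$, $\beta\in C^{1,\alpha}$ and (\ref{F11a}). Since $p_0$ and $R$ are fixed by (\ref{A17a}), $F(0,0;\lambda)=0$ for every $\lambda$, so the trivial line is the laminar family.

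\textbf{Step 2: the linearization.} Eliminating $\eta$ via the kinematic condition $\eta=-w/\Psi_{*y}(0)$ on $y=0$, the Fréchet derivative $L(\lambda)=D_{(w,\eta)}F(0,0;\lambda)$ reduces to
\[
\lambda^2 w_{\hat x\hat x}+w_{yy}+\omega_p(y,\Psi_*)w=0,\qquad \Psi_{*y}w_y-\kappa w=0 \ \text{ on } y=0,\qquad w(-d)=0 .
\]
Expanding in the cosine series $\cos(k\tau_*\hat x)$, the $k$-th mode has nontrivial solutions exactly when $\sigma(k\lambda\tau_*;s)=0$. Because $\sigma$ is strictly increasing in $\tau$ and $\tau_*$ is a root, at $\lambda=1$ only $k=1$ contributes ($k=0$ contributes as well only if $\sigma(0)=0$, which is excluded since $\sigma(0)<\sigma(\tau_*)=0$). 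Hence
\[
\ker L(1)=\operatorname{span}\{\gamma(y;\tau_*)\cos(\tau_*\hat x)\}.
\]

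\textbf{Step 3: Fredholmness.} $L(\lambda)$ is an elliptic operator with an oblique boundary condition of Zaremba/Robin type on a strip. We show it is Fredholm of index zero by Schauder estimates together with the Fourier decomposition, noting that for large $k$ the mode operators are coercive. The cokernel is identified via the symmetric (self-adjoint) structure of the mode problems: the range is the orthogonal complement of the kernel element under the pairing
\[
\int\!\!\int f\,\phi\,dxdy+\int h\,\phi|_{y=0}\,dx ,
\]
after the normalization that makes the boundary form symmetric (multiply by $\Psi_{*y}(0)$).

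\textbf{Step 4: transversality — the main obstacle.} The crossing condition $\partial_\lambda L(1)\phi_*\notin\operatorname{Ran}L(1)$ is the delicate point. We compute
\[
\partial_\lambda L\,\phi_*=-2\tau_*^2\gamma\cos(\tau_*\hat x).
\]
Pairing with $\phi_*$ gives $-2\tau_*^2\int\gamma^2\,dy$ times a positive constant. Since the mode pairing reproduces
\[
\partial_\tau\sigma(\tau_*)=2\tau_*\int\gamma^2\,dy/\gamma(0)^2>0 ,
\]
(up to the sign of $\Psi_{*y}(0)=s<0$), it is nonzero, which is exactly strict monotonicity of $\sigma$. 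Care is needed here to track signs and the precise form of the pairing, which is why we expect this step to be the hardest.

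\textbf{Step 5: conclusion.} Crandall--Rabinowitz then yields a $C^1$ curve $t\mapsto(\lambda(t),w[t],\eta[t])$ with
\[
\lambda(0)=1,\qquad w=t\phi_*+o(t).
\]
Undoing the scaling with $\Lambda(t)=\Lambda_*/\lambda(t)$ and using $\eta=-w|_{y=0}/\Psi_{*y}(0)$ gives (\ref{A17ac}) and (\ref{A17acz}). The $C^1$ regularity in $t$ of $(\lambda,\xi[t](\lambda^{-1}\hat x))$ is precisely the regularity of the Crandall--Rabinowitz curve in the rescaled variables.
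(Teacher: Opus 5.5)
Your proposal is correct and follows the paper's proof essentially step for step. Both use flattening with $\hat x=\lambda x$ and Crandall--Rabinowitz in $\lambda$ at $\lambda_*=1$ along the laminar line, a one-dimensional kernel spanned by $\gamma(y;\tau_*)\cos(\tau_*\hat x)$ (from the dispersion relation and strict monotonicity of $\sigma$), Fredholmness from the self-adjoint Robin structure, and transversality from $\partial_\lambda L\,\phi_*=-2\tau_*^2\gamma\cos(\tau_*\hat x)$, whose pairing with $\phi_*$ is a positive multiple of $-2\tau_*^2\int\gamma^2\,dy\neq 0$.

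Two small corrections. First, the relation $\eta=-w/\Psi_{*y}(0)$ in Step 2 holds for the physical perturbation $v=u-\Psi_{*y}(y)(y+d)\zeta/d$ of Lemma~\ref{LA17a}, not for your flattened $w$, which vanishes on the top. Second, $\partial_\tau\sigma=2s\tau\int_{-d}^0\gamma^2\,dy$ with $s<0$, so $\sigma$ is strictly decreasing and $\sigma(0)>0$ rather than $<0$; only strict monotonicity is used, so the argument is unaffected.
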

 In what follows we shall use the notations $\tau_*$ and $\Lambda_*$ introduced in the theorem and the corresponding laminar solution we denote by $(\Psi_*(y),d_*)$. The parameter $s$, which is included in the definition of the laminar solution, is fixed and we will not indicate the dependence of the function on $s$, if it is not needed.

\begin{remark}\label{RA25a}
1). If the functions $\rho$ and $\beta$ are real analytic then the functions $\Lambda(t)$ and $\xi[t]$ are also real analytic. This can be prove by complementing
the Crandall--Rabinowitz theorem by results from Chapter I.16, \cite{Ki1}.

2). Smoothness properties of the function $\psi[t]$ will be clarified latter. It can be obtain also from the analysis of the boundary value problem (\ref{J18aa})--(\ref{J18ac}) if we know smoothness properties of the functions $\Lambda(t)$ and $\xi[t]$.
\end{remark}

In order to formulate a global version of the above theorem,  we introduce the set
\begin{equation}\label{A30ac}
{\mathcal U}_\delta =\{(\psi,\xi,\Lambda)\;:\;\xi\in C^{2,\alpha}_{e,\Lambda}(\Bbb R),\;
\psi\in C^{2,\alpha}_{0,e,\Lambda}(D_\xi),\;\;\Lambda\in (0,\infty)\}
\end{equation}
for $\delta >0$,
such that $\psi$ and $\xi$ are $\Lambda$--periodic and even functions satisfying
\begin{equation}\label{A30aaa}
\delta\leq\Lambda\leq\delta^{-1},
\end{equation}
\begin{equation}\label{A30ab}
\max_x\xi(x)\leq R-\delta\;\;\mbox{on $S_\xi$},
\end{equation}
\begin{equation}\label{A30a}
\max_{(x,y)} |\psi(x,y)|\leq \delta^{-1},
\end{equation}
\begin{equation}\label{A30aa}
\max_x|\xi'(x)|\leq \delta^{-1},
\end{equation}
\begin{equation}\label{Ma1a}
\psi_y\leq -\delta\;\;\mbox{on $S_\xi$}
\end{equation}
and
\begin{equation}\label{A30ad}
\min_x (\xi(x)+d)\geq \delta.
\end{equation}
Let $\varepsilon$ be another positive number. We will consider $\psi$ satisfying $\psi(x,\xi(x))=0$ and $\psi_y(x,\xi(x))<0$. This motivates the introduction of the following set.

\noindent
{\bf Definition of ${\mathcal U}_{\delta,\varepsilon}$}\label{Ma30a} Assume that $(\psi,\xi,\Lambda)\in {\mathcal U}_\delta$. Then for each $x\in\Bbb R$ there exists $y_*\in [-d,\xi(x))$, such that either $-\psi_y(x,y)>\delta$ on $(y_*(x),\xi(x))$ and this interval is the largest, i.e. $-\psi_y(x,y_*(x))=\delta$ (in this case $y_*(x)<-d$) or $y_*(x)=-d$. We define
\begin{equation}\label{Jun10a}
\varepsilon_*(\psi)=\min_x \psi(x,y_*(x)).
\end{equation}
Let also $x_*$ be a point, where the minimum is attained.
Observe that this is a positive function for all $\psi\in{\mathcal U}_\delta$.
 The function $\psi$ belongs to ${\mathcal U}_{\delta,\varepsilon}$ if and only if $\varepsilon_*(\psi)>\varepsilon$. If $y_*(x)=-d$ for all $x$, then $\psi(x,y_*(x))=-p_0$ in the case when $(\psi,\xi,\Lambda)$ satisfies the problem (\ref{Ma13ba})--(\ref{Ma13bc}).

\medskip
The important property, which follows from the definition, is the following
\begin{equation}\label{Jun14a}
\int_{y_*(x)}^{\xi(x)}\psi_y(x,s)ds=-\psi(x,y_*(x)),
\end{equation}
and therefore
\begin{equation}\label{Jun14aa}
-\psi(x_*,y_*(x_*))>\delta (\xi(x_*)-y_*(x_*)).
\end{equation}
One can check that
$$
{\mathcal U}_{\delta_1,\varepsilon_1}\subset {\mathcal U}_{\delta_2,\varepsilon_2}\;\;\mbox{if}\;\;\delta_2\leq \delta_1\;\;\mbox{and}\;\;\varepsilon_2\leq\varepsilon_1
$$
and that
\begin{equation}\label{Jun10b}
{\mathcal U}_\delta=\bigcup_\varepsilon {\mathcal U}_{\delta,\varepsilon}.
\end{equation}
Let also
$$
\widehat{\mathcal U}_{\varepsilon}=\bigcup_{\delta>0}{\mathcal U}_{\delta,\varepsilon},
$$
\begin{equation}\label{A30b}
{\mathcal U}=\bigcup_\delta {\mathcal U}_\delta
\end{equation}
and
\begin{equation}\label{A30ba}
{\mathcal O}=\{(\psi,\xi,\Lambda)\in {\mathcal U}\;:\; (\psi,\xi,\Lambda)\;\;
\mbox{satisfy the problem}\;\; (\ref{J18aa})--(\ref{J18ac})\}.
\end{equation}

\begin{theorem}\label{TMa2a} We assume that all assumptions in Theorem \ref{TA17b} are fulfilled and the functions $\rho$ and $\beta$ are real analytic. Then there exists the branch
\begin{equation}\label{Ma2a}
\Bbb R\ni t\rightarrow (\Psi[t],\xi[t],\Lambda(t)))\in C^{2,\alpha}_{0,e,\Lambda(t)}(\overline{D_\xi})\times C^{2,\alpha}_{e,\Lambda(t)}(\Bbb R)\times\Bbb R_+,
\end{equation}
where $(\Psi[t],\xi[t])$ is a $\Lambda(t)$--periodic solution to (\ref{J18aa})--(\ref{J18ac}) coinciding with the branch (\ref{A17aa}) for small $t$. The vector function
\begin{equation}\label{Ma2aa}
\Bbb R\ni t\rightarrow (\lambda(t),\xi[t](\lambda^{-1}(t)\hat{x}))\in \Bbb R_+\times C^{2,\alpha}_{0,e,\Lambda^*}(\Bbb R)
\end{equation}
is real analytic with respect to $t$ (up to a local reparametrization). Furthermore, one of the following alternatives occurs:

(a) there exists $\varepsilon_*>0$  such all element of the branch belongs to $\widehat{\mathcal U}_{\varepsilon_*}$ and that for every $\delta>0$ there exists
$t_\delta$ such that all elements  $(\psi[t],\xi[t],\Lambda(t))$  do not belong to ${\mathcal U}_{\delta,\varepsilon_*}$ for $|t|>t_\delta$;

b) There exist  sequences $\{t_j\}$ approaching $\infty$  and $\{\delta_j\}$ approching zero as $j\to\infty$, such that
$(\psi[t_j],\xi[t_j],\Lambda(j))\in {\mathcal U}_{\delta_j}$ and
\begin{equation}\label{Jun7b}
 \psi[t_j](x_{*j},y_{*j})\to 0, \;\; \psi_y[t_j](x_{*j},y_{*j})=\delta_j \;\;\mbox{as}\;\;j\to\infty.
\end{equation}
Furthermore,
\begin{equation}\label{Jun14b}
\psi[t_j](x_{*j},y_{*j})>\delta_j(\xi[t_j](x_{*j})-y_{*j}(x_{*j}))\;\;\mbox{for $j>0$}.
\end{equation}

\end{theorem}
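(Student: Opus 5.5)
The plan is to prove Theorem \ref{TMa2a} with the analytic global bifurcation theory of Buffoni--Toland (Dancer's global continuation of a real-analytic local curve), applied to a fixed-period reformulation of (\ref{Ma13ba})--(\ref{Ma13bc}).

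\emph{Reformulation.} First, rescale $\hat x=\lambda x$, so every problem is posed on the fixed period $\Lambda_*$, with $\lambda$ entering as a parameter: $\partial_x^2$ becomes $\lambda^2\partial_{\hat x}^2$. Then flatten the domain, either by a partial hodograph (Dubreil-Jacotin) change of variables $q=\hat x$, $p=-\psi$, which is available because $\psi_y<0$ near the surface, or by the linear map $y\mapsto (y+d)/(\xi+d)$. This turns the problem into $\mathcal F(w,\lambda)=0$, where $\mathcal F$ maps an open subset of a Banach space $X\times\Bbb R_+$ of even $\Lambda_*$-periodic $C^{2,\alpha}$ functions into $Y$. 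Since $\rho$ and $\beta$ are analytic, $\mathcal F$ is real analytic. The laminar solution gives the trivial curve $\mathcal F(0,\lambda)=0$.

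\emph{Hypotheses of the analytic theorem.} Three things must be checked.
\begin{itemize}
\item[(1)] Fredholmness of index zero of $D_w\mathcal F$ on $\mathcal U_{\delta}$. This follows from Schauder estimates for the elliptic problem with the oblique Bernoulli boundary condition. The boundary condition is elliptic because $\psi_y\le-\delta$ on $S_\xi$.
\item[(2)] The local curve. By Theorem \ref{TA17b} and Remark \ref{RA25a}(1), the kernel at $\lambda=1$ is one-dimensional, spanned by $\gamma(y;\tau_*)\cos(\tau_*\hat x)$, and the transversality condition holds because $\sigma$ is strictly increasing in $\tau$. So the local curve is analytic.
\item[(3)] Compactness. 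Bounded closed subsets of $\mathcal F^{-1}(0)\cap\overline{\mathcal U_{\delta,\varepsilon}}$ must be compact in $X\times\Bbb R$. I obtain this from elliptic regularity, which gives a $C^{3,\alpha}$ bound and hence compact embedding into $C^{2,\alpha}$. The constraints (\ref{A30aaa})--(\ref{A30ad}) keep the domain and parameter away from degeneration, and the constraint $\varepsilon_*(\psi)>\varepsilon$ controls the region where $\psi_y$ might vanish.
\end{itemize}

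Given (1)--(3), the Buffoni--Toland theorem yields a continuous curve extending the local one, analytic up to local reparametrization, for which one of the following holds: the curve is unbounded or leaves every set $\overline{\mathcal U_{\delta,\varepsilon}}$; or it is a closed loop. The loop alternative is excluded by the usual nodal argument: along the branch $\xi$ stays strictly monotone on a half period, since positivity of $\psi_x$ in the appropriate half-cell is preserved by the maximum principle. Assumption (A) is used here, together with evenness, and it also gives injectivity of the projection to $\lambda$ near the trivial curve.

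\emph{Sorting into (a) and (b).} Suppose first that some $\varepsilon_*$ works for the whole branch. Then leaving every compact set means leaving every $\mathcal U_{\delta,\varepsilon_*}$ as $|t|\to\infty$; otherwise compactness gives a limit point and analytic continuation continues past it. This is alternative (a). If no such $\varepsilon_*$ exists, there are $t_j\to\infty$ with $\varepsilon_*(\psi[t_j])\to0$. Choosing $\delta_j\to0$ with the elements in $\mathcal U_{\delta_j}$, the definition of $y_*$ gives $\psi_y=\delta_j$ at $(x_{*j},y_{*j})$ and $\psi\to0$ there. Inequality (\ref{Jun14b}) is then exactly (\ref{Jun14aa}). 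This is alternative (b).

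The main obstacle will be the compactness step (3), and specifically the uniform elliptic estimates up to the free boundary. The flattening map must be controlled in terms of $\delta$ alone, and the lower-order terms $y\rho''(-\psi)$ must be handled even though $\psi_y$ may vanish in the interior (counter-currents), where the hodograph transform fails. I would first try the non-hodograph flattening with a Schauder estimate on the fixed strip, and use the hodograph only on the strip $y_*(x)<y<\xi(x)$ where $-\psi_y>\delta$.
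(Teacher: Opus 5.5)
Your proposal is correct and follows essentially the same route as the paper: flattening to the fixed period $\Lambda_*$ and strip; the analytic local curve from Crandall--Rabinowitz, where your transversality via strict monotonicity of $\sigma$ is the same nondegeneracy as the paper's check $-2\tau_*^2\int u_*^2\,dy\neq 0$, since $\partial_\tau\gamma_y(0;\tau)=2\tau\int_{-d}^0\gamma^2\,dy$; the global theorem (Theorem \ref{ThJ3a}) with Fredholmness and compactness of $\overline{\mathcal U_{\delta,\varepsilon}}\cap\mathcal O$ from $C^{3,\alpha_1}$ bounds obtained by hodograph estimates on the strip where $-\psi_y>\delta$; loop exclusion via the nodal property coming from $\gamma(\cdot;\tau_*)>0$; and the same sorting into (a)/(b) according to whether a uniform $\varepsilon_*$ exists. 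The one ingredient your sketch of step (3) does not name is the a priori gradient bound of Lemma \ref{LA29a}, which depends on $C_1,L,M,m$ and hence on $\delta$: the constraint $-\psi_y\geq\delta$ only bounds $h_p$ from above, and it is $|\nabla\psi|^2\le C_*$ that keeps $h_p$ away from zero, so that the local estimates of Lemma \ref{LMa31a} hold uniformly.
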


We note that from (\ref{Jun14b}) it follows that the product $\delta_j$ and $\xi[t_j](x_{*j})-y_{*j}(x_{*j})$ tends to zero when $j\to\infty$.

\section{Laminar flows and the dispersion equation}

\subsection{The laminar flows}

In this section we will study laminar solutions, i.e. solutions to the problem (\ref{J18aa})--(\ref{J18ac}). %and the corresponding dispersion equations.
The pair $(\Psi,d)$ is considered as an unknown.
We assume that $p_0<0$ and the functions $\rho$ and $\beta$ satisfy (\ref{F11a}).
Under this condition the Cauchy problem (\ref{F19a})
has a unique solution for each real $s$, which is denoted by $\Psi(y;s)$.

\begin{lemma}\label{LemF11a} There exists $d_1>0$ such that the problem (\ref{J18aa}), (\ref{J18ab})
has a unique solution for each $d\leq d_1$. This solution satisfies
\begin{equation}\label{F11aa}
\Psi(y)=\frac{p_0y}{d}+w,\;\;|w|d^{-1}+|w'|\leq Cd\;\;\mbox{for $y\in [0,-d]$},
\end{equation}
where $C$ is independent of $y$.

\end{lemma}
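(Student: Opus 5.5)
We rewrite the problem (\ref{J18aa}), (\ref{J18ab}) as a fixed point problem for the perturbation $w$ and solve it by the contraction mapping principle on the short interval $[-d,0]$. Set $\Psi(y)=p_0y/d+w(y)$; note that $p_0y/d$ satisfies both boundary conditions ($0$ at $y=0$ and $-p_0$ at $y=-d$). Hence $w$ must solve
\begin{equation*}
w''=-\omega\Big(y,\frac{p_0y}{d}+w\Big)\;\;\mbox{on $(-d,0)$},\qquad w(-d)=w(0)=0.
\end{equation*}
Let $G_d(y,z)$ be the Green function of $-\partial_y^2$ on $(-d,0)$ with Dirichlet conditions, so $0\le G_d\le d/4$ and $|\partial_yG_d|\le 1$. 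The problem is then equivalent to $w=T w$, where
\begin{equation*}
(Tw)(y)=\int_{-d}^0G_d(y,z)\,\omega\Big(z,\frac{p_0z}{d}+w(z)\Big)\,dz .
\end{equation*}
We work in $X=\{w\in C([-d,0]):w(-d)=w(0)=0\}$ with the sup norm.

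The key estimates follow from (\ref{F11a}) and the definition (\ref{Okt5a}) of $\omega$. Since $|y|\le d$ and $|p_0z/d|\le |p_0|$, for $|w|\le 1$ we get $|\omega(z,p_0z/d+w)|\le C_2$, with $C_2$ depending only on $g,C_1,p_0$ (taking, say, $d\le1$). The Lipschitz bounds on $\rho'$ and $\beta$ give
\begin{equation*}
|\omega(z,p_1)-\omega(z,p_2)|\le L|p_1-p_2|.
\end{equation*}
Here $\rho'$ is Lipschitz because $\rho\in C^{2,\alpha}$; if one wants to use only (\ref{F11a}), one should read the Lipschitz hypothesis as covering $\rho'$, and that is what we assume.

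From these bounds, on the ball $\{\|w\|\le 1\}$ we get $\|Tw\|\le C_2d^2/4$ and $\|Tw_1-Tw_2\|\le Ld^2\|w_1-w_2\|/4$. Choose $d_1$ so small that $C_2d_1^2/4\le1$ and $Ld_1^2/4<1$. Then $T$ maps the ball into itself and is a contraction there, which gives existence and uniqueness in the ball.

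Global uniqueness needs one more step. Let $\Psi$ be an arbitrary solution. Then $w$ satisfies $w=Tw$, and the linear growth bound in (\ref{F11a}) gives $\|w\|\le C d^2(1+\|w\|)$. For small $d$ this forces $\|w\|\le 2Cd^2\le1$, so $w$ lies in the ball and the contraction argument applies. Alternatively, one can apply the contraction on all of $X$, since the Lipschitz estimate is global.

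Finally, the estimates in (\ref{F11aa}) follow directly from the representation of $w$. We have $|w|\le \int G_d|\omega|\le C_2d^2/4$, so $|w|d^{-1}\le Cd$. Also $|w'(y)|\le\int_{-d}^0|\partial_yG_d||\omega|\,dz\le C_2 d$. The main point requiring care is that every constant is uniform in $d\le d_1$. This holds because the factor $y$ in $\omega$ is bounded by $d$ and the argument $p_0z/d$ stays in the fixed interval $[0,|p_0|]$.
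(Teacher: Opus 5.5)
Your proposal is correct and follows the paper's argument: write $\Psi=p_0y/d+w$, represent $w$ through the Dirichlet Green function on $(-d,0)$, and apply the Banach fixed point theorem using the $O(d^2)$ contraction and size bounds. Your additional care fills in details the paper's sketch passes over. These are: the observation that the Lipschitz bound in $p$ for the term $gy\rho'(-\psi)$ requires Lipschitz continuity of $\rho'$, which (\ref{F11a}) does not literally give; working on a ball so that local Lipschitz continuity suffices; the a priori bound yielding uniqueness among all solutions; and the $|\partial_yG_d|\le 1$ estimate for $w'$.
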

\begin{proof} First, consider  the problem
\begin{equation}\label{J18b}
u^{''}(y)=f(y)\;\;\mbox{om $(-d,0)$}, \;\;u(-d)=u(0)=0.
\end{equation}
Its solution is given by
\begin{equation}\label{J18ba}
u(y)=\int_{-d}^0G(y,z)f(z)dz,
\end{equation}
where $G$ is the Green function:
$$
G(y,\tau)=\frac{1}{d}y(\tau+d)\;\;\mbox{for $\tau<y$ and}\;\;G(y,\tau)=\frac{1}{d}\tau(y+d)\;\;\mbox{for $\tau>y$}.
$$
This function is negative inside $(-d,0)^2$ and $|G|\leq d$.

We are looking for the solution to  (\ref{J18aa}), (\ref{J18ab}) in the form
$\Psi(y)=\frac{p_0y}{d}+w(y)$.
Then $w$ must satisfy
$$
w^{''}(y)+\omega(y,\frac{p_0y}{d}+w(y))=0\;\;\mbox{on $(-d,0)$}
$$
and $w(-d)=w(0)=0$. Applying (\ref{J18ba}), we get
$$
w(y)=F(y,w):=\int_{-d}^0G(y,\tau)f(\tau,w(\tau))d\tau,
$$
where
$$
f(y,w(y))=-\omega(y,\frac{p_0y}{d}+w(y)).
$$
Using (\ref{F11a}), we can verify that
$$
|F(y,w_2)-F(y,w_1)|\leq Cd^2||w_2-w_1||_{L_\infty}´.
$$
Therefore $F$ is a contraction operator for small $d$ and $|F(y,w)|\leq cd^2$.
Applying Banach fixed point theorem, we obtain existence of a unique solution $w$ for small $d$ satisfying (\ref{F11aa}).

\end{proof}

\begin{remark} Let us construct solutions to the problem (\ref{F19a}) for large $|s|$, $s<0$. We are looking for the solution in the form
$$
\Psi(y;s)=sy+w,
$$
where $w$ solves
\begin{eqnarray}\label{M3az}
&&w^{''}(y)+\omega(y,sy+w)=0\;\;\mbox{on $(-d,0)$},\nonumber\\
&&\Psi(0)=0,\;\;\Psi'(0)=0.
\end{eqnarray}
Then
\begin{equation}\label{M3aa}
w(y)=F(w)(y):=-\int_y^0(z-y)\omega(z,sz+w(z))dz.
\end{equation}
Since
$$
|F(w_1)(y)-F(w_2)(y)|\leq Cd^2\max_{-d\leq z\leq 0} |w_1(z)-w_2(z)|,
$$
the operator $F$ is a contraction in $L^\infty(-d,0)$ for small $d$ and the solution to (\ref{M3aa}) satisfies
$$
|w(y;s)|\leq C d^2(1+|s|d),\;\;|w'(y;s)|\leq C d(1+|s|d)\;\;\mbox{and}\;\;
|\partial_s w(y;s)|\leq C d^3(1+|s|d)^2.
$$
Therefore, if $|s|\geq C_1d^{-1}$, $s<0$, then
$$
|w(y;s)|\leq Cd^2,\;\;|w'(y;s)|\leq Cd\;\;\mbox{and}\;\;|\partial_s w(y;s)|\leq Cd^3
$$
for $y\in [-d,0]$.
\end{remark}

In the previous proposition {\bf Assumption (A)} is not used but it
 is essential in proving of the next proposition.

\begin{proposition}\label{PM1a} Solutions to the problem (\ref{F19a}) satisfy the following monotonicity property
\begin{equation}\label{F7b}
\Psi(y,s_2)>\Psi(y,s_1)\;\;\mbox{for $y<0$ if $s_2<s_1$}.
\end{equation}
\end{proposition}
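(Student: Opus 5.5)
The plan is to differentiate the Cauchy problem (\ref{F19a}) with respect to $s$ and to use Assumption (A) to show that the derivative has a sign. Set $v(y)=\partial_s\Psi(y;s)$. Since $\rho\in C^{2,\alpha}$ and $\beta\in C^{1,\alpha}$, the function $\omega(y,p)$ is $C^1$ in $p$, so $v$ exists. Differentiating (\ref{F19a}) gives
\begin{equation*}
v''(y)+\omega_p(y,\Psi(y;s))v=0\;\;\mbox{on $(-\infty,0]$},\qquad v(0)=0,\;\;v'(0)=1 .
\end{equation*}
The claim (\ref{F7b}) is equivalent to $\Psi(y;s_1)-\Psi(y;s_2)=\int_{s_2}^{s_1}v(y;\sigma)\,d\sigma<0$ for $y<0$. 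It therefore suffices to show that $v(y;s)<0$ for all $y<0$ and every $s$. Near $y=0$ this is clear from $v(0)=0$ and $v'(0)=1>0$: going left from $0$, $v$ is negative.

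Next we argue by contradiction. Suppose $v(y;s)$ vanishes at some point $y<0$, and let $-d$ be the first zero to the left of $0$. Then $v$ is a nontrivial solution of (\ref{J18c}) on $(-d,0)$ with $\mu=0$ and $w(-d)=w(0)=0$. Here $\Psi=\Psi(\cdot;s)$ satisfies (\ref{J18aa}) with $\Psi(0)=0$, which is exactly the setting in which Assumption (A) is imposed, for every $d>0$. So $0$ would be an eigenvalue of (\ref{J18c}), contradicting the positivity of all eigenvalues. Hence $v$ has no zero on $(-\infty,0)$, and $v<0$ there.

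Integrating in $s$ over $[s_2,s_1]$ then gives $\Psi(y;s_1)<\Psi(y;s_2)$ for $y<0$, which is (\ref{F7b}).

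The main point to check is that Assumption (A) really applies to $\Psi(\cdot;s)$ for every real $s$ and every $d>0$, not only to $d=d(s)$. As formulated, the assumption is stated for each $d>0$ and any solution of (\ref{J18aa}) with $\Psi(0)=0$, which covers the Cauchy solutions $\Psi(y;s)$ on every interval $(-d,0)$. A minor additional step is justifying the differentiability in $s$ and the equation for $v$. This follows from standard ODE dependence on initial data, using the $C^1$ regularity of $\omega$ in $p$ and the Lipschitz bounds (\ref{F11a}), which give global existence on $(-\infty,0]$.
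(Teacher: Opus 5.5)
Your proof is correct, but for this proposition it takes a different route from the paper.

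\textbf{The paper's route.} The paper works directly with two Cauchy solutions. Their difference satisfies $w''+A(y)w=0$ with the averaged potential $A=\int_0^1\omega_p(y,t\Psi(y;s_1)+(1-t)\Psi(y;s_2))\,dt$, and Assumption (A) does not apply to this potential directly. The paper therefore runs a bisection in $s$. This produces pairs $\hat s_j<\tilde s_j$ converging to some $s_\dag$, with crossing points $y_j\to y_\dag$. It then uses that the averaged potentials $A_j$ converge uniformly to $\omega_p(y,\Psi(y;s_\dag))$. Hence positivity of the first Dirichlet eigenvalue, supplied by (A) for $\Psi(\cdot;s_\dag)$, persists for $A_j$ and forces $\Psi(\cdot;\hat s_j)\equiv\Psi(\cdot;\tilde s_j)$, a contradiction.

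\textbf{Your route.} You pass to the infinitesimal version. The derivative $\partial_s\Psi$ solves the linearized equation with exactly the potential $\omega_p(y,\Psi(y;s))$ that (A) covers. So any zero $y_1<0$ (the first zero is not even needed) would make $\partial_s\Psi|_{(y_1,0)}$ a Dirichlet eigenfunction with $\mu=0$. This is precisely the paper's own proof of Proposition \ref{PF20a}(i), and integrating in $s$ then gives (\ref{F7b}). In effect you prove Proposition \ref{PF20a}(i) first and obtain the present statement as a corollary.

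\textbf{What each approach buys.} Your route is shorter. It avoids the nested-interval construction, the eigenvalue-perturbation step, and the separate handling of the degenerate case $y_\dag=0$. Its only extra requirement is $C^1$ dependence of $\Psi(y;s)$ on $s$. This holds because $\omega$ is $C^1$ in $p$ for $\rho\in C^{2,\alpha}$ and $\beta\in C^{1,\alpha}$, and the paper's finite-difference argument needs continuity of $\omega_p$ anyway.
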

\begin{proof} Since $s_2<s_1$ the inequality (\ref{F7b}) is true for small $y$,
Assume that $\Psi(y,s_2)=\Psi(y,s_1)$ for a certain $y_0$, which is the largest negative $y$ with this property. Let $s_3=(s_2+s_1)/2$. Then there exists $y_1>y_0$ such that $\Psi(y_1,s_3)=\Psi(y_1,s_1)$ or $\Psi(y_1,s_3)=\Psi(y_1,s_2)$ and this is the largest such $y$. We put $\tilde{s}_1=s_2$, $\hat{s}_1=s_3$ or $\tilde{s}_1=s_3$, $\hat{s}_1=s_1$ depending on the above choice of $s_3$. Then we have $\Psi(y_1,\tilde{s}_1)=\Psi(y_1,\hat{s}_1)$ and  $\tilde{s}_1<\hat{s}_1$.

Continuing this procedure we can construct a sequences $\{\hat{s}_j\}$, $\{\tilde{s}_j\}$ and $\{y_j\}$ with the following properties
\begin{equation}\label{M7a}
s_2\leq\hat{s}_1\leq\hat{s}_2\cdots <\cdots \tilde{s}_2\leq\tilde{s}_1\leq s_1,\;\; y_1<y_2<\cdots <0
\end{equation}
and
 $$
 \Psi(y_j,\hat{s}_j)=\Psi(y_j,\tilde{s}_j)\;\;
 \mbox{and}\:\;\Psi(y,\hat{s}_j)<\Psi(y,\tilde{s}_j)\;\;\mbox{ for $y\in (y_j,0)$}.
 $$
Moreover
\begin{equation}\label{M7aa}
\tilde{s}_j-\hat{s}_j\to 0\;\;\mbox{ as $j\to\infty$}.
\end{equation}
We denote by $s_\dag$ the limit point of the sequences $\{\hat{s}_j\}$ and $\{\tilde{s}_1\}$, which is the same due to (\ref{M7a}) and (\ref{M7aa}). Let also $y_\dag$ the limit of the sequence $\{y_j\}$.

If $y_\dag=0$ then due to uniqueness in Lemma \ref{LemF11a} there are no two different solution coinciding for a small $|y|$. If $y_\dag>0$ then application of the assumption
(A) to the function $\Psi(\cdot,s_\dag)$ on the interval $(y_\dag,0)$ implies that the Dirichlet problem  for the Fr\'{e}chet derivative at the function $\Psi(\cdot,s_\dag)$ has positive first eigenvalue.

Furthermore, we have
\begin{eqnarray*}
&&\omega(y,\Psi(y;\hat{s}_j))-\omega(y,\Psi(y;\tilde{s}_j))=
\int_0^1\frac{d}{d\,t}\omega(y,t\Psi(y;\hat{s}_j)+(1-t)\Psi(y;\tilde{s}_j))dt\\
&&=\int_0^1\omega_p(y,t\Psi(y;\hat{s}_j)+(1-t)\Psi(y;\tilde{s}_j))dt
(\Psi(y;\hat{s}_j)-\Psi(y;\tilde{s}_j))
\end{eqnarray*}
where
$$
A_j(y)=\int_0^1\omega_p(y,t\Psi(y;\hat{s}_j)+(1-t)\Psi(y;\tilde{s}_j))dt.
$$
Since both $\Psi(y;\hat{s}_j)$ and $\Psi(y;\tilde{s}_j)$ approach $\Psi(y;s_\dag)$ when $j\to\infty$ on the interval $[M,0]$ for a fixed $M$, $M<y_j$, we have that
$$
|A_j(y)-\omega_p(y,\Psi(y;s_\dag))|\leq \varepsilon_j\;\;\mbox{on $[M,0]$},
$$
where $\varepsilon_j\to o$ as $j\to\infty$. This implies that $\Psi(y;\hat{s}_j)$ must coincide with $\Psi(y;\tilde{s}_j)$ for large $j$. This contradiction proves Lemma.

\end{proof}

Consider the equation
\begin{equation}\label{M1a}
\Psi(y;s)=-p_0.
\end{equation}
The largest negative $y$ satisfying the equation we denote by $-d(s)$. According to Lemma \ref{LemF11a} and Proposition \ref{PM1a} the equation is solvable for $s$ in a neighborhood of $-\infty$, and  $d(s)\to 0$ and $\Psi'(-d(s);s)=p_0/d(s)+O(1)$ as $s\to-\infty$. The function $s\to d(s)$ can be continued up to the first $s_*$ such that $\Psi'(-d(s_*);s_*)=0$ if $s_*$ is finite or $s_*=+\infty$. Therefore, $\Psi'(-d(s_*);s_*)<0$ for $s\in (-\infty, s_*)$.

In the next proposition and in what follows we shall denote the derivative with respect to $s$ by $\dot{F}(y;s)=\partial_s F(y;s)$.

\begin{proposition}\label{PF20a} (i) $\dot{\Psi}(y;s)<0$ for all $s\in\Bbb R$ and for all $y<0$;

(ii) If $s_*$ and $d(s_*)$ are finite then
 $\dot{d}(s)\to\infty$ as $s\to s_*$.

\end{proposition}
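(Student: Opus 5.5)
For (i), I will differentiate the Cauchy problem (\ref{F19a}) with respect to $s$. The function $\dot\Psi(y;s)$ solves the linearized problem
\[
\dot\Psi''+\omega_p(y,\Psi(y;s))\dot\Psi=0\;\;\mbox{on $(-\infty,0]$},\qquad \dot\Psi(0)=0,\;\;\dot\Psi'(0)=1 .
\]
Since $\dot\Psi'(0)=1>0$ and $\dot\Psi(0)=0$, we have $\dot\Psi(y)<0$ for $y<0$ close to $0$.

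Suppose $\dot\Psi$ vanishes at some point $y_0<0$, and take the largest such point. Then $\dot\Psi$ is a nontrivial solution of the Dirichlet problem (\ref{J18c}) on $(y_0,0)$ with $\mu=0$. This contradicts Assumption (A) applied with $d=-y_0$, since (A) allows only positive eigenvalues. Hence $\dot\Psi<0$ on $(-\infty,0)$.

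Alternatively, (i) follows from Proposition \ref{PM1a}: that result gives that $\Psi$ is strictly decreasing in $s$, so $\dot\Psi\le 0$. The nonstrict inequality is then upgraded to a strict one by the zero-counting argument above.

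For (ii), I differentiate the identity $\Psi(-d(s);s)=-p_0$ with respect to $s$. This gives
\[
-\Psi_y(-d(s);s)\,\dot d(s)+\dot\Psi(-d(s);s)=0,\qquad\mbox{so}\qquad \dot d(s)=\frac{\dot\Psi(-d(s);s)}{\Psi_y(-d(s);s)} .
\]
This is valid for $s<s_*$ by the implicit function theorem, because $\Psi_y(-d(s);s)<0$ there.

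By (i), the numerator is negative. The denominator is negative and tends to $\Psi_y(-d(s_*);s_*)=0$ as $s\to s_*$. Here I use continuity of $\Psi$ and $\Psi_y$ in $(y,s)$, together with the assumption that $d(s)\to d(s_*)$ finite. For the latter I must check that $d(s)$ is monotone: it is, since $\dot d>0$ on $(-\infty,s_*)$, so it has a limit, which is assumed finite.

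The numerator tends to $\dot\Psi(-d(s_*);s_*)$, which is strictly negative by (i) applied at $s_*$, since $-d(s_*)<0$. Therefore $\dot d(s)$ is a quotient of two negative quantities whose denominator tends to $0^-$ while the numerator stays bounded away from $0$. Hence $\dot d(s)\to+\infty$.

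The main obstacle is the continuity step at $s_*$. I need $-d(s)\to -d(s_*)$, and I need that the limit point is the point where $\Psi_y$ vanishes and $\Psi=-p_0$. I would handle this by taking the limit $d_\infty=\lim d(s)$, which exists by monotonicity and is finite by hypothesis. Passing to the limit in $\Psi(-d(s);s)=-p_0$ gives $\Psi(-d_\infty;s_*)=-p_0$. Passing to the limit in the sign condition gives $\Psi_y(-d_\infty;s_*)\le 0$. By the definition of $s_*$ as the first value where $\Psi_y$ vanishes at the level point, this yields $\Psi_y(-d_\infty;s_*)=0$ and identifies $d_\infty=d(s_*)$.
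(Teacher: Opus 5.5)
Your proposal is correct and follows the paper's proof: in (i), a zero $y_0<0$ of $\dot\Psi$ would make it a Dirichlet eigenfunction with $\mu=0$ on $(y_0,0)$, contradicting Assumption (A), and (ii) comes from differentiating $\Psi(-d(s);s)=-p_0$ to get $\dot d=\dot\Psi(-d;s)/\Psi_y(-d;s)$. Your treatment of (ii) is more careful than the paper's, which writes the differentiated identity directly at $s=s_*$, where $\dot d$ is not defined; you instead pass to the limit $s\to s_*$ using the monotonicity of $d$ and the continuity of $\Psi$, $\Psi_y$ and $\dot\Psi$ in $(y,s)$.
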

\begin{proof} (i) We have that $\dot{\Psi}(0;s)=0$ and $\Psi_{ys}(0;s)=1$ for all $s<s_*$. Assume that
$\Psi_s(y_1;s)=0$ for a certain $y_1<0$. Then the function $w(y)=\Psi_s(y;s)$ satisfies the problem (\ref{J18c}) on the interval $(y_1,0)$ and by this assumption $\Psi_s=0$ on the interval $(y_1,0)$ and hence everywhere. This contradicts to
$\Psi_{ys}(0;s)=1$.

(ii) Since $\Psi(-d(s_*),s_*)=0$, we have
$$
-\Psi'(-d(s_*),s_*)\dot{d}(s_*)+\dot{\Psi}(-d(s_*),s_*)=0.
$$
This implies the required assertion.
\end{proof}

%We introduce the function
%\begin{equation}\label{G8aa}
%{\mathcal R}(s)=\frac{1}{2}s^2+g\rho(0)d(s).
%\end{equation}
%Let ${\mathcal R}(s)$ be given by (\ref{A17a}). Then
%$$
%{\mathcal R}(s)\to\infty\;\;\mbox{as $s\to 0$ or $s\to -\infty$}.
%$$
If $(\Psi(y;s),d(s))$ solves the problem
\begin{eqnarray}\label{F19az}
&&\Psi^{''}(y)+\omega(y,\Psi(y))=0\;\;\mbox{on $(-d(s),0]$},\nonumber\\
&&\Psi(0)=0,\;\;\Psi'(0)=s,
\end{eqnarray}
then it solves also the problem (\ref{J18aa})--(\ref{J18ac}) with
$
R={\mathcal R}(s).
$

\subsection{Green's function. Generalized maximum principle.}\label{SM2a}

Consider the Dirichlet boundary value problem
\begin{eqnarray}\label{F23a}
&&U^{''}(y)+H(y)U=f(y)\;\;\mbox{on $(y_1,0)$},\nonumber\\
&&V(y_1)=\alpha_1,\;\;V(0)=\alpha_2,
\end{eqnarray}
where $H$ is a bounded function. We assume that the least eigenvalue of the Dirichlet problem for the operator $-\partial_y^2-H$ on the interval $(y_1,0)$ is positive.

Introduce functions $h_1(y)$ and $h_2(y)$ solving the problem with $f=0$ and with $(\alpha_1,\alpha_2)=(0,1)$ and  $(\alpha_1,\alpha_2)=(1,0)$
respectively. According to the generalized maximum principle (see \cite{PW}),
\begin{eqnarray}\label{F23aa}
&&h_1(y)>0,\;\;h_2(y)>0\;\;\mbox{in $(y_1,0)$ and}\nonumber\\
&&h_{1y}(0)>0,\;\;h_{1y}(y_1)>0,\;\;h_{2y}(0)<0,\;\;h_{2y}(y_1)<0.
\end{eqnarray}
 Now the solution of (\ref{F23a}) is given by
\begin{equation}\label{M1b}
U(y)=\frac{1}{a}\Big(h_1(y)\int_y^0h_2(z)f(z)dz+h_2(y)\int_{-d}^y h_1(z)f(z)dz\Big)
+\alpha_1h_2(y)+\alpha_2h_1(y),
\end{equation}
where
$$
a=h_{2y}h_1-h_2h_{1y}=h_{2y}(0)=-h_{1y}(y_1)<0.
$$
In particular, if $\alpha_1=\alpha_2=0$ then
\begin{equation}\label{M9a}
U'(0)=\int_{-d}^0 h_1(z)f(z)dz.
\end{equation}

\subsection{Dispersion equation}\label{SA14a}

Let a laminar flow $(\Psi(y),d)$ solve the problem (\ref{J18aa})--(\ref{J18ac}) on the interval $(-d,0)$. We are looking for a solution to the problem (\ref{Okt5aa}), (\ref{Ma13bb}), (\ref{Ma13bc}) in the form
$$
\psi(x,y)=\Psi(y)+\varepsilon u(y)\cos (\tau x)+O(\varepsilon^2)
$$
and
$$
\eta(x)=\varepsilon \alpha\cos(\tau x)+O(\varepsilon^2),
$$
where $\varepsilon$ is a small number.
Then $\alpha=-u(0)/\Psi'(0)$ and $u$ together with $\tau$ solves
the spectral problem (\ref{F21a}).
In order to obtain a scalar equation to express solvability of the spectral problem, we introduce the function $\gamma=\gamma (y;\tau)$ as the solution of the problem
\begin{eqnarray}\label{F21abc}
&&(\partial_y^2-\tau^2)\gamma+\omega_p(y,\Psi)\gamma=0\;\;\mbox{in $(-d,0)$},\nonumber\\
&&\gamma(0)=1,\nonumber\\
&& \gamma(-d)=0.
\end{eqnarray}
Due to Assumption (A) this problem is uniquely solvable and by Sect. \ref{SM2a} $\gamma>0$ on $(-d,0]$.
Let $\sigma(\tau;s)=\Psi_y\gamma_y-\kappa |_{y=0}$, where $\kappa$ is defined by (\ref{F21aa}).  Then the spectral problem has a solution $(\tau,u)$ if and only if the dispersion equation (\ref{F21ac}) holds
%\begin{equation}\label{F21ac}
%\sigma(\tau;s)=0
%\end{equation}
and the corresponding eigenfunction in (\ref{F21a}) is $u=\gamma(y;\tau_*)$, where $\tau_*$ denotes the solution of the dispersion equation.

The first simple property is that the function $\sigma(\tau)=\sigma(\tau;s)$ is strongly increasing with respect with $\tau\in [0,\infty)$. To prove this, we differentiate the problem (\ref{F21abc}) with respect to $\tau$ and obtain
\begin{eqnarray*}
&&(\partial_y^2-\tau^2)h+\omega_p(y,\Psi)h=2\gamma\;\;\mbox{in $(-d,0)$},\\
&&h(0)=0,\\
&& h(-d)=0,
\end{eqnarray*}
where $h(y)=h(y;\tau)=\tau^{-1}\gamma_\tau(y;\tau)$.
By Assumption (A) the operator $-\partial_y^2+\tau^2-\omega_p(y,\Psi)$ considered on $(-d,0)$ with the Dirichlet boundary condition is positive definite for $\tau=0$ and hence for all $\tau$. Therefore we can use results of Sect. \ref{SM2a}.
The right hand side of the problem is positive for $\tau>0$. Applying formula (\ref{M1b}), we get $h<0$ for $y\in (-d,0)$ and hence $\gamma_{y\tau}(0;\tau)\geq h_y(0)\tau >0$.
Therefore
\begin{equation}\label{F21af}
\sigma_\tau(\tau)=\gamma_{y\tau}(0;\tau)>0\;\;\mbox{for $\tau>0$ and }\;\sigma(\tau)\to\infty\;\;\mbox{as}\;\tau\to\infty.
\end{equation}
Due to this property  in order to prove the existence of the solution to the dispersion equation (\ref{F21ac}) it is sufficient to investigate the sign of $\sigma(0;s)$. In the next lemma we present a formula for $\sigma(0;s)$.

\begin{lemma} Let $s\neq 0$ and $s<s_*$. Then (\ref{F13ab}) holds,
%\begin{equation}\label{F13ab}
%\sigma(0;s)=\frac{-s-g\rho_0\dot{d}(s)-s\dot{d}V_y(0)}{s\dot{d}(s)},\;\;s=\psi_y(0),
%\end{equation}
where $V$ is the solution of the problem (\ref{F13a}) and
%\begin{eqnarray}\label{F13a}
%&&V^{''}(y)+\omega_p(y,\Psi(y;s))V=\rho_p(-\Psi)\;\;\mbox{on $(-d,0)$}\nonumber\\
%&&V(-d)=V(0)=0.
%\end{eqnarray}

\begin{equation}\label{F18b}
V_y(0)=\int_{-d}^0\gamma(z;0)\rho_p(-\Psi(z;s))d\,z\leq 0
\end{equation}
if the fluid is stably stratified, i.e. $\rho_p\leq 0$.
\end{lemma}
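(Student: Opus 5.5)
The strategy is to express $\gamma(\cdot;0)$ through derivatives of the laminar family with respect to $s$. The boundary term $\gamma_y(0;0)$ is then read off from the normalization at $y=0$.

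\textbf{Step 1 (variation in $s$).} Differentiating (\ref{F19a}) in $s$, the function $\dot\Psi=\partial_s\Psi(y;s)$ satisfies
\[
\dot\Psi''+\omega_p(y,\Psi)\dot\Psi=\partial_s\big(\omega\big)-\omega_p\dot\Psi .
\]
Here one must be careful: since $\omega(y,p)=-gy\rho'(-p)-\beta(p)$ depends on $p$ only, we simply get $\dot\Psi''+\omega_p\dot\Psi=0$, with $\dot\Psi(0)=0$ and $\dot\Psi_y(0)=1$. Differentiating (\ref{A10a}), i.e. $\Psi(-d(s);s)=-p_0$, gives
\[
\dot\Psi(-d)=\Psi_y(-d)\,\dot d .
\]

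\textbf{Step 2 (the companion solution).} The second relevant solution is $\Psi_y$ itself. Differentiating (\ref{J18aa}) in $y$, and noting that $\omega$ depends explicitly on $y$ through the term $-gy\rho'(-\Psi)$ with $\partial_y\omega=-g\rho'(-\Psi)$, we obtain
\[
\Psi_y''+\omega_p\Psi_y=g\rho'(-\Psi).
\]
This is where $V$ from (\ref{F13a}) enters, up to the sign convention $\rho_p(-\Psi)$ versus $\rho'(-\Psi)$, which I would track via (\ref{Ma9a}).

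\textbf{Step 3 (build $\gamma$).} I seek $\gamma(\cdot;0)=A\dot\Psi+B\Psi_y+C V$, with the constants chosen so that the combination solves the homogeneous equation and meets the boundary data:
\begin{itemize}
\item $C=-g$ (or $+g$ according to the sign), so that the right-hand side of Step 2 is cancelled by $V$;
\item $\gamma(0)=1$, which gives $B\,s=1$ because $V(0)=\dot\Psi(0)=0$, so $B=1/s$;
\item $\gamma(-d)=0$, which gives $A\Psi_y(-d)\dot d+B\Psi_y(-d)=0$, so $A=-1/(s\dot d)$. This uses $\Psi_y(-d)\neq0$, guaranteed for $s<s_*$.
\end{itemize}
Then
\[
\gamma_y(0;0)=A+B\,\Psi_{yy}(0)+C\,V_y(0).
\]
By Assumption (A) the Dirichlet problem is uniquely solvable, so this combination is exactly $\gamma(\cdot;0)$.

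\textbf{Step 4 (assemble $\sigma(0;s)$).} Substituting into $\sigma(0;s)=s\gamma_y(0)-\kappa$ with $\kappa=\Psi_{yy}(0)+g\rho_0/s$ from (\ref{F21aa}), the $\Psi_{yy}(0)$ terms cancel. What remains is
\[
-\frac{1}{\dot d}-\frac{g\rho_0}{s}+sC\,V_y(0).
\]
Putting this over the common denominator $s\dot d$ gives (\ref{F13ab}) once the normalization of $V$ (the factor $g$) is matched to the paper's convention. I expect this sign and constant bookkeeping to be the main nuisance rather than a conceptual obstacle. Finally, $\mathcal R'(s)=s+g\rho_0\dot d$ from (\ref{A17a}) yields the second form of (\ref{F13ab}). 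Note that $\dot d\neq0$ must hold for the formula to make sense; I would take this as implicit in the statement.

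\textbf{Step 5 (the integral formula and its sign).} Formula (\ref{F18b}) follows directly from (\ref{M9a}), applied with $H=\omega_p$ on $(-d,0)$ and $f=\rho_p(-\Psi)$, taking $h_1=\gamma(\cdot;0)$. Section \ref{SM2a} under Assumption (A) gives $\gamma(\cdot;0)>0$, so $V_y(0)\le0$ whenever $\rho_p\le0$.
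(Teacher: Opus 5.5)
Your strategy is the paper's. Its proof establishes (\ref{F13aa}), $\gamma(y;0)=\bigl(-\dot\Psi+\dot d(\Psi'-V)\bigr)/(s\dot d)$, which is your ansatz $A\dot\Psi+B\Psi_y+CV$ with $A=-1/(s\dot d)$ and $B=1/s$. It then reads off $\sigma(0;s)$ and obtains (\ref{F18b}) from (\ref{M9a}) and $\gamma(\cdot;0)>0$, exactly as you do.

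There is, however, one concrete error in Step 3: the value of $C$. Since $\Psi_y$ enters with coefficient $B=1/s$, the forcing that $V$ must cancel is $B\,g\rho'(-\Psi)$, so $C=-gB=-g/s$, not $-g$. There is also no sign ambiguity, because the paper uses $\rho_p$ and $\rho'$ for the same derivative; compare (\ref{A18a}) with (\ref{F18b}). With $C=-g$ the combination leaves the residual $g(1/s-1)\rho'(-\Psi)$, so it is not $\gamma(\cdot;0)$. Your Step 4 expression would then produce $-gs\,V_y(0)$, i.e.\ a numerator $-s-g\rho_0\dot d-gs^2\dot d\,V_y(0)$. That extra factor $s$ cannot be absorbed by any constant normalisation of $V$. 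With $C=-g/s$ you get $\sigma(0;s)=-1/\dot d-g\rho_0/s-gV_y(0)$, which is (\ref{F13ab}) up to the factor $g$.

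That remaining factor $g$ is a defect of the paper, not of your computation. (\ref{F13aa}) requires $\Psi'-V$ to solve the homogeneous equation, so the paper's $V$ must actually satisfy $V''+\omega_pV=g\rho_p(-\Psi)$; the same $g$ is dropped in (\ref{Ma9a}) compared with (\ref{A16a}). With (\ref{F13a}) as printed, (\ref{F13ab}) holds with $gV_y(0)$ in place of $V_y(0)$. Either way (\ref{F18b}) changes at most by the positive factor $g$, so the sign conclusion stands.

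Finally, you need not assume $\dot d\neq 0$. Differentiating $\Psi(-d(s);s)=-p_0$ gives $\dot d=\dot\Psi(-d)/\Psi_y(-d)$. Proposition~\ref{PF20a}(i) gives $\dot\Psi(-d)<0$, and $\Psi_y(-d)<0$ for $s<s_*$, so $\dot d>0$.
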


\begin{proof} First let us show that
\begin{equation}\label{F13aa}
\gamma(y;0)=\frac{-\dot{\Psi}(y;s)+\dot{d}(s)(\Psi'(y;s)-V(y))}{s\dot{d}(s)},
\end{equation}
where $V$ is the solution of the problem (\ref{F13a}).

Indeed the relations (\ref{F13a}) guarantee that the function $\gamma $ satisfies the first equation in (\ref{F21abc}). Using that $-\Psi'(-d;s)\dot{d}(s)+\dot{\Psi}(-d;s)=0$, one can check that $\gamma$ satisfies the last relation in (\ref{F21abc}). The second relation in (\ref{F21abc}) is verified directly. This leads to the formula (\ref{F13aa}).

Using (\ref{F13aa}), we write
\begin{eqnarray*}
&&\sigma(0;s)=\frac{-\Psi_{sy}(0;s)+\dot{d}(s)(\Psi_{yy}(0;s)-V_y(0))}{\dot{d}}
-\Psi_{yy}(0;s)-\frac{g\rho_0}{\Psi_y(0;s)}\\
&&=\frac{-s-g\rho_0\dot{d}(s)-s\dot{d}V_y(0)}{s\dot{d}}.
\end{eqnarray*}
Now the formula for $V_y(0)$ follows from (\ref{M9a}).

\end{proof}

A straightforward application of this lemma is the following

\begin{corollary}\label{CM9a} We have
$$
\sigma(0;s)=\frac{-s+O(1)}{s\dot{d}(s)}<0\;\;\mbox{for large negative $s$}.
$$
If the fluid is stably stratified then
$$
\sigma(0;s)=\frac{-\dot{d}(s)(g\rho_0+sV_y(0;s))+O(1)}{s\dot{d}(s)}>0
$$
when $s_*\leq 0$ and $s\to s_*$.
If $s_*>0$, then
$$
\sigma(0;s)=-\frac{g\rho_0}{s}+O(1)>0\;\;\mbox{for small negative $s$}.
$$
%and
%$$
%\sigma(0;s)=-\frac{\rho_0}{s}+O(1)\;\;\mbox{for small positive $s$}
%$$
\end{corollary}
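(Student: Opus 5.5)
The three asymptotic statements of Corollary \ref{CM9a} all come from the formula (\ref{F13ab}),
$$
\sigma(0;s)=\frac{-s-g\rho_0\dot d(s)-s\dot d(s)V_y(0;s)}{s\dot d(s)},
$$
by identifying which term in the numerator dominates in each regime and then reading off the sign, using the sign of $s\dot d(s)$.

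\textbf{Step 1: signs and sizes of $d$ and $\dot d$.} Differentiating $\Psi(-d(s);s)=-p_0$ with respect to $s$ gives
$$
\dot d(s)=\frac{\dot\Psi(-d(s);s)}{\Psi'(-d(s);s)}.
$$
For $s<s_*$ both $\dot\Psi(-d;s)$ (Proposition \ref{PF20a}(i)) and $\Psi'(-d;s)$ are negative. Hence $\dot d>0$, and so $s\dot d<0$ for every $s<0$.

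\textbf{Step 2: large negative $s$.} By Lemma \ref{LemF11a} and the Remark following it, $d(s)\to0$. Using $\Psi=sy+w$ with $|\partial_s w|\le Cd^3$, we get $\dot\Psi(-d;s)=-d+O(d^3)$ and $\Psi'(-d;s)=s+O(d)$. Therefore $\dot d=O(d/|s|)$, which is small.

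We also need $V_y(0)=O(1)$. This follows from (\ref{F18b}), since $\gamma(\cdot;0)$ is bounded by the maximum principle of Sect. \ref{SM2a}, $\rho'$ is bounded by (\ref{F11a}), and the interval has length $d$. It follows that $g\rho_0\dot d+s\dot dV_y(0)=O(1)$, so the numerator is $-s+O(1)>0$. Dividing by $s\dot d<0$ gives $\sigma(0;s)<0$.

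\textbf{Step 3: stable stratification with $s_*\le0$.} Here (\ref{F18b}) gives $V_y(0)\le0$. Since $s<0$, this yields $sV_y(0)\ge0$, so $g\rho_0+sV_y(0)\ge g\rho_0>0$.

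As $s\to s_*$ with $d(s_*)$ finite, Proposition \ref{PF20a}(ii) gives $\dot d\to\infty$. The term $-s$ stays bounded, so it is $O(1)$ relative to $\dot d$. The numerator is therefore $-\dot d\,(g\rho_0+sV_y(0))+O(1)$, which is negative for $s$ near $s_*$. Again dividing by $s\dot d<0$ gives $\sigma(0;s)>0$.

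This is the main obstacle. We must ensure $d(s_*)$ stays finite and that $V_y(0;s)$ remains bounded as $s\to s_*$. I would argue this by continuity of the Cauchy data in $s$ and by Assumption (A), which keeps the Dirichlet problem (\ref{F13a}) uniformly invertible near $d(s_*)$. If instead $d(s)\to\infty$, I would argue through the formula for $\dot d$ directly. In either case only the sign matters, and that sign comes from $g\rho_0+sV_y\ge g\rho_0$.

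\textbf{Step 4: $s_*>0$, small negative $s$.} Now $s=0$ lies strictly inside $(-\infty,s_*)$, so $d$, $\dot d>0$ and $V_y(0)$ are continuous and finite at $s=0$. Rewrite
$$
\sigma(0;s)=-\frac{1}{\dot d}-\frac{g\rho_0}{s}-V_y(0).
$$
Here $-1/\dot d$ and $-V_y(0)$ are $O(1)$. Hence $\sigma(0;s)=-g\rho_0/s+O(1)$, which tends to $+\infty$ as $s\to0^-$ and is therefore positive.
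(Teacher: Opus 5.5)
Your proposal is correct and is the paper's argument: the paper states the corollary as a direct reading of (\ref{F13ab}), using your ingredients ($\dot d=\dot\Psi/\Psi'>0$ via Proposition \ref{PF20a}(i), $\dot d\to0$ as $s\to-\infty$ from the small-$d$ expansion, $\dot d\to\infty$ as $s\to s_*$ via Proposition \ref{PF20a}(ii), and $\sigma(0;s)=-1/\dot d-g\rho_0/s-V_y(0)$ near $s=0$). In Step 3 you need no bound on $V_y(0;s)$, since only $g\rho_0+sV_y(0)\ge g\rho_0$ enters, and the paper does not treat the case $d(s)\to\infty$ either, since it tacitly takes $d(s_*)$ finite through Proposition \ref{PF20a}(ii).
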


\subsection{Historical remarks}

Laminar flows and dispersion equations for stratified steady Stokes waves were considered in many papers.
In the paper \cite{Wal},  unidirectional laminar flows together with the dispersion equations were considered. The author obtained an estimate for the parameters of the problem which guarantees the existence of roots of the dispersion equation.
Another type of laminar flows were suggested in the paper \cite{W2}, where the vorticity and the density functions  depend on the parameter $d$. Such laminar flows allows to include in the consideration the dispersion equation with small roots, which is important in study of solitary waves.

Here we present a condition on vorticity and the density functions, which guarantees the existence of laminar flows with dispersion equations having small roots. We note that the  laminar flows are not necessary unidirectional in our case.

\section{Bifurcation branches of water waves}

In this section we use the study of the laminar solutions and the dispersion equation
performed in the previous sections. The first step here is a reduction of the water wave problem with unknown boundary to a problem defined on a fixed domain.

\subsection{Flattening the boundary}\label{SOkt13b}

We assume that $\psi_y\neq 0$ only on the boundary $S$
and  use the following flattening change of variables
\begin{equation}\label{M11b}
\hat{x}=\lambda x,\;\;\;\hat{y}=\frac{(y+d)}{\xi(x)+d},
\end{equation}
where
\begin{equation}\label{M11ba}
\lambda=\frac{\Lambda_*}{\Lambda},
\end{equation}
to reduce the problem to a fixed period $\Lambda_*$ and to a  strip with fixed depth
$$
Q=\{(\hat{x},\hat{y})\,:\,\hat{x}\in \Bbb R,\;\;0<\hat{y}<1\}.
$$
Since
$$
\partial_x=\lambda\Big(\partial_{\hat{x}}-\frac{\hat{y}\eta'}{\eta+d}\partial_{\hat{y}}
\Big),\;\;\partial_y=\frac{1}{\eta+d}\partial_{\hat{y}},
$$
where
$$
\eta(\hat{x})=\xi (\lambda^{-1}\hat{x})
$$
and $'$ means $\partial_{\hat{x}}$,
the equations (\ref{Okt5aa}) and  (\ref{Ma13bc}) takes the form
\begin{eqnarray}\label{K2aa}
&&F(\hat{\psi},\eta;\lambda):=\Big(\lambda^2\Big(\partial_{\hat{x}}-\frac{\hat{y}\eta'}{\eta+d}
\partial_{\hat{y}}\Big)^2+\Big(\frac{1}{\eta+d}\partial_{\hat{y}}\Big)^2\Big)
\hat{\psi}+\hat{\omega}(\hat{y},\hat{\psi})=0\;\;\mbox{in $Q$},\nonumber\\
&&G(\hat{\psi},\xi,\lambda):=\frac{1}{2}\Big(\lambda^2\Big|\Big(\partial_{\hat{x}}-
\frac{\hat{y}\eta'}{\eta+d}\partial_{\hat{y}}\Big)\hat{\psi}\Big|^2
+\Big|\frac{1}{\eta+d}\partial_{\hat{y}}\hat{\psi}\Big|^2\Big)\nonumber\\
&&+g\rho(0)(\eta(\hat{x})+d)-R=0\;\;\mbox{for $\hat{y}=1$},\nonumber\\
&&\hat{\psi}=0\;\;\mbox{for $\hat{y}=1$},\nonumber\\
&&\hat{\psi}=-p_0\;\;\mbox{for $\hat{y}=0$},
\end{eqnarray}
where
$$
\hat{\psi}(\hat{x},\hat{y})=\psi\Big(\lambda^{-1}\hat{x},
\hat{y}(\eta(\hat{x})+d)-d\Big)
\;\;\mbox{or}\;\;\psi(x,y)=\hat{\psi}(\hat{x},\hat{y})
$$
and
$$
\hat{\omega}(\hat{y},\hat{\psi})=\omega(y,\psi(x,y))=\omega\Big(
\hat{y}(\eta(\hat{x})+d)-d,\hat{\psi}(\hat{x},\hat{y})\Big).
$$
Then the problem (\ref{K2aa}) is equivalent to
\begin{equation}\label{Ma31aa}
(F(\hat{\psi},\eta,\lambda),G(\hat{\psi},\eta,\lambda))=0,
\end{equation}
which is defined on $\Lambda_*$--periodic, even functions from
$C^{2,\alpha}(Q)\times C^{2,\alpha}(\Bbb R)$  satisfying $\hat{\psi}(\hat{x},0)=-p_0$,
$\hat{\psi}(\hat{x},1)=0$ and $\eta+d>0$.
Here and in what follows we use the spaces $C^{k,\alpha}_{\textrm{b}}(\overline{Q})$, $C^{k,\alpha}_{\Lambda_*}(\overline{Q})$,
$C^{k,\alpha}_{0,e,\Lambda_*}(\overline{Q})$ which are defined similar to the spaces in $D$ introduced before Theorem \ref{TA17b}.

%Let us introduce some functional spaces.
%For $k=0,1,\ldots$, $\alpha\in(0,1)$ and $-\infty<a<b<\infty$
%we denote by $C^{k,\alpha}(\overline{Q}_{a,b})$ and $C^{k,\alpha}([a,b])$
%the H\"older spaces of functions in $\overline{Q}_{a,b}=\{(x,y)\in \overline{Q}
%\,:\,a\leq x\leq b\}$ and $[a,b]$ respectively.

%The space $C^{k,\alpha}_{\textrm{b}}(\overline{Q})$ consists of functions
%$u$ defined on $\overline{Q}$
%such that
%$$
%\sup_{a\in \Bbb R}||w||_{C^{k,\alpha}(\overline{Q}_{a,a+1})}<\infty.
%$$
% The subspaces   $C^{k,\alpha}_{\Lambda_*}(\overline{Q})$,
%$C^{k,\alpha}_{0,e,\Lambda_*}(\overline{Q})$  consists of
%$\Lambda_*$-periodic ($\Lambda_*$--periodic, even) functions in
%$C^{k,\alpha}_{\textrm{b}}(\overline{Q})$ vanishing for $y=-d$.

%Similarly, one can define the spaces $C^{k,\alpha}_{\textrm{b}}(\Bbb R)$,
%$C^{k,\alpha}_{\Lambda_*}(\Bbb R)$ and
%$C^{k,\alpha}_{e,\Lambda_*}(\Bbb R)$.

We remind that it is assumed that the density function $\rho$ is of class $C^{2,\alpha}$
for a certain $\alpha\in (0,1)$ and the Bernoulli function $\beta$ is of class
$C^{1,\alpha}$.

The introduced above operators are continuous in the following spaces
\begin{eqnarray*}
&&(F(\hat{\psi},\eta,\lambda),G(\hat{\psi},\eta,\lambda))\;:\;
C^{2,\alpha}_{0,e,\Lambda_*}(\overline{Q})\times C^{2,\alpha}_{e,\Lambda_*}(\Bbb R)\times\Bbb R_+\\
&&\rightarrow C^{0,\alpha}_{0,e,\Lambda_*}(\overline{Q})\times C^{1,\alpha}_{e,\Lambda_*}(\Bbb R).
\end{eqnarray*}

\begin{theorem}\label{TA17a} We assume that the laminar flow $(\Psi_*,d_*)$ is chosen such that the dispersion  equation (\ref{F21ac}) has a solution $\tau_*>0$ with the corresponding eigenfunction $\gamma(y;\tau_*)$.
Then there exist $\varepsilon > 0$ and a mapping
\begin{equation}\label{Ma2bb}
(-\varepsilon,\varepsilon)\ni t\rightarrow (\hat{\psi}[t],\eta[t],\lambda(t)))\in C^{2,\alpha}_{0,e,\Lambda_*}(\overline{Q})\times C^{2,\alpha}_{e,\Lambda_*}(\Bbb R)\times\Bbb R_+,
\end{equation}
where $(\hat{\psi}[t],\eta[t],\lambda(t))$ is an even, $\Lambda_*$--periodic solution to (\ref{K2aa}). The mapping is continuous together with the first derivatives with respect to $t$. Furthermore,
\begin{eqnarray}\label{A17aczz}
&&\lambda(0)=1,\;\;\eta[t]=-\frac{t}{\Psi_{*y}(0)}\cos(\tau_*\hat{x})+o(|t|)
\;\;\mbox{and}\nonumber\\
&&\psi[t]=\Psi_*(y)+t\gamma(y;\tau_*)\cos(\tau_*\hat{x})+o(|t|).
\end{eqnarray}

\end{theorem}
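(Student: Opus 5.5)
We will apply the Crandall--Rabinowitz theorem on bifurcation from a simple eigenvalue to the map $(F,G)$ of (\ref{Ma31aa}), with $\lambda$ as the bifurcation parameter and the trivial branch $(\hat\psi,\eta)=(\hat\Psi_*,0)$. Here $\hat\Psi_*(\hat y)=\Psi_*(d_*(\hat y-1))$ is the flattened laminar flow, which solves (\ref{K2aa}) for every $\lambda>0$ because it does not depend on $\hat x$. To put the problem in standard form, we subtract $\hat\Psi_*$ so that the unknown $\phi=\hat\psi-\hat\Psi_*$ has zero Dirichlet data at $\hat y=0,1$. We then regard
\[
(\phi,\eta,\lambda)\mapsto (F,G)\colon C^{2,\alpha}_{0,e,\Lambda_*}(\overline Q)\times C^{2,\alpha}_{e,\Lambda_*}(\Bbb R)\times\Bbb R_+\to C^{0,\alpha}_{e,\Lambda_*}(\overline Q)\times C^{1,\alpha}_{e,\Lambda_*}(\Bbb R).
\]
Since $\rho\in C^{2,\alpha}$ and $\beta\in C^{1,\alpha}$, this map is $C^1$, and its mixed derivative $\partial_\lambda D$ is continuous. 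That is all the $C^1$ version of Crandall--Rabinowitz requires.

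\textbf{Step 1: the linearization.} We compute the linearization $L(\lambda)$ at $(\hat\Psi_*,0)$ in the variables $(\phi,\eta)$. It is cleaner to return to the original variables and set $u=\phi-\eta\,\hat y\,\hat\Psi_{*\hat y}/(\eta+d)$, i.e. to use the ``good unknown'' $u=\phi-\eta\Psi_*'\cdot(\text{factor})$. In these variables the linearized problem decouples into three parts:
\begin{itemize}
\item the interior equation $\lambda^2u_{\hat x\hat x}+u_{yy}+\omega_p(y,\Psi_*)u$;
\item the Dirichlet condition $u(-d)=0$;
\item the surface relations $u(0)=-\Psi_*'(0)\eta$ and $\Psi_*'u_y-\kappa u$ at $y=0$, obtained by linearizing the Bernoulli condition and using $\Psi_*''(0)=-\omega(0,0)$.
\end{itemize}
Expanding in $\cos(k\tau_*\hat x)$ reduces the kernel to the spectral problem (\ref{F21a}) with $\tau=\lambda k\tau_*$. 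By the strict monotonicity (\ref{F21af}), $\sigma(\cdot;s)$ has exactly one positive root $\tau_*$. Hence at $\lambda=1$ only the mode $k=1$ contributes, and the kernel is one-dimensional, spanned by $(\gamma(y;\tau_*)\cos\tau_*\hat x,\,-\Psi_*'(0)^{-1}\cos\tau_*\hat x)$. The mode $k=0$ would require $\sigma(0;s)=0$, which is excluded because $\sigma(0)<\sigma(\tau_*)=0$.

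\textbf{Step 2: Fredholm property and the range.} $L(1)$ is an elliptic operator with an oblique (Neumann-type) boundary condition after eliminating $\eta$, so it is Fredholm of index zero. We characterize its range by a solvability condition: the projection onto the $\cos\tau_*\hat x$ mode, paired with $\gamma(\cdot;\tau_*)$ via Green's formula, must vanish. Since the problem is formally self-adjoint in the $u$-variables, the cokernel is one-dimensional, with the same profile as the kernel. For the modes $k\ge 2$ and $k=0$ we get uniform invertibility from $\sigma(\lambda k\tau_*)\neq0$ together with the growth $\sigma\sim\tau$. Summing the Fourier modes in Hölder norms is then handled by standard Schauder estimates rather than mode by mode.

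\textbf{Step 3: transversality.} Differentiating $L(\lambda)$ in $\lambda$ at $\lambda=1$ and applying it to the kernel vector gives $-2\tau_*^2\gamma\cos\tau_*\hat x$ in the interior equation. Pairing with the cokernel element yields, up to a nonzero factor, $\tau_*\sigma_\tau(\tau_*)=\tau_*\gamma_{y\tau}(0;\tau_*)$. This is positive by (\ref{F21af}), so the derivative does not lie in the range of $L(1)$.

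Crandall--Rabinowitz then gives the $C^1$ curve $t\mapsto(\hat\psi[t],\eta[t],\lambda(t))$ with $\lambda(0)=1$. The leading terms (\ref{A17aczz}) are $t$ times the kernel vector, after undoing the good-unknown change of variables. We expect the main obstacle to be Step 2: writing the linearization correctly in the flattened variables, then checking the range/cokernel characterization and the Fredholm estimate in the even $\Lambda_*$-periodic Hölder spaces.
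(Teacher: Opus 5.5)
Your proposal is correct in substance and follows the paper's proof: Crandall--Rabinowitz in $\lambda$ at the flattened laminar flow; the good unknown $v=u-\Psi_y(y+d)\zeta/d$ of Lemma~\ref{LA17a}, which reduces the kernel to the dispersion relation (simplicity coming from monotonicity of $\sigma$); Fredholmness via self-adjointness; and transversality from $-2\tau_*^2\gamma(y;\tau_*)\cos(\tau_*\hat x)$, where your quantity $\tau_*\gamma_{y\tau}(0;\tau_*)$ equals $2\tau_*^2\int_{-d_*}^0\gamma^2\,dy$, exactly the paper's nonvanishing pairing. One correction: the statement asserts a $C^1$ curve, and for that Crandall--Rabinowitz also needs $\partial_{\Phi\Phi}\widehat{\mathcal F}$ to be continuous, as listed in Theorem~\ref{TA17aaz}; with only $C^1$ regularity and a continuous mixed derivative it yields merely a continuous curve.
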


Clearly, Theorem \ref{TA17b} follows from this assertion.
Remark \ref{RA25a} is a consequence of the next
\begin{remark}\label{RA25az}
1). If the functions $\rho$ and $\beta$ are real analytic then the functions $\lambda(t)$, $\psi[t]$ and $\eta[t]$ are also real analytic. This can be prove by complementing
the Crandall--Rabinowitz theorem by results from Chapter I.16, \cite{Ki1}.

2). The smoothness property of the function $\psi[t]$ mentioned in Remark \ref{RA25a} is equivalent to the smoothness property of the function $\widehat{\psi}$ formulated in the above theorem.
\end{remark}

Theorem \ref{TMa2a} in the new variables has the following form.

\begin{theorem}\label{TMa2b} We assume that all assumptions from Theorem \ref{TA17a} are fulfilled. Moreover the functions $\rho$ and $\beta$ are real analytic. Then there exists the branch
\begin{equation}\label{Ma2ba}
\Bbb R\ni t\rightarrow (\hat{\psi}[t],\eta[t],\lambda(t)))\in C^{2,\alpha}_{0,e,\Lambda_*}(\overline{Q})\times C^{2,\alpha}_{e,\Lambda_*}(\Bbb R)\times\Bbb R_+,
\end{equation}
where $(\hat{\psi}[t],\eta[t],\lambda(t))$ is a even, $\Lambda_*$--periodic solution to (\ref{K2aa})  coinciding with the branch (\ref{Ma2bb}) for small $t$. The vector function (\ref{Ma2ba}) is real analytic with respect to $t$ (up to a local reparameterization).

Furthermore, the second part of Theorem \ref{TMa2a} concerning the alternatives (a) or (b) holds.
% there exists a sequence $(\psi[t_j],\eta[t_j],\lambda_j)$, $j=1,2,\ldots$, such %that (\ref{Jun7a}) is valid and
%one of the alternatives (a) or (b) of Theorem \ref{TMa2a} hold.

%(a) there is $\varepsilon_*>0$ and  such all element of the branch belongs to %$\widehat{\mathcal U}_{\varepsilon_*}$ and infinity many terms of the sequence %$(\psi[t_j],\xi[t_j],\Lambda_j)$ lie outside
%${\mathcal U}_{\delta,\varepsilon_*}$ for each $\delta>0$;

%(b) $\varepsilon(\psi[t_j],\xi[t_j],\Lambda_j)\to 0$ as $j\to\infty$.

\end{theorem}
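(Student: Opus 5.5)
The plan is to apply the analytic global bifurcation theorem of Buffoni--Toland (Dancer's theory, cf. \cite{Ki1}) to the operator $(F,G)$ from (\ref{Ma31aa}), with $\lambda$ as the bifurcation parameter. First we work on a suitable open set. Take the interior of the set of $(\hat\psi,\eta,\lambda)$ in $C^{2,\alpha}_{0,e,\Lambda_*}(\overline{Q})\times C^{2,\alpha}_{e,\Lambda_*}(\Bbb R)\times\Bbb R_+$ that corresponds, through the flattening (\ref{M11b}), to $\bigcup_{\delta,\varepsilon}{\mathcal U}_{\delta,\varepsilon}$. On this set, $\eta+d>0$, $\hat\psi_{\hat y}<0$ at $\hat y=1$, and $\lambda>0$. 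Real analyticity of $\rho$ and $\beta$ makes $(F,G)$ real analytic there.

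Next I would check the structural hypotheses. The first is the Fredholm property: the linearization at any solution in the open set must be Fredholm of index zero. Freeze coefficients and reduce to an elliptic problem in $Q$. It has the Dirichlet condition at $\hat y=0$ and an oblique, Bernoulli-type condition at $\hat y=1$. Eliminating $\eta$ through $\hat\psi=0$ at the top gives a Robin-type condition, and this is where $\psi_y\ne0$ on $S$ is used. Schauder theory for periodic strips then gives the Fredholm property. Its index equals that of the linearization at the laminar flow, which is zero.

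The second hypothesis is the local start. Theorem \ref{TA17a} gives the Crandall--Rabinowitz curve, and the transversality verified there is the strict monotonicity (\ref{F21af}) of $\sigma$ in $\tau$. The real analyticity of the local curve is Remark \ref{RA25az}. With these, the Buffoni--Toland theorem yields a continuous curve $t\mapsto(\hat\psi[t],\eta[t],\lambda(t))$, $t\in\Bbb R$, extending the local branch and locally analytically reparametrizable. Along this curve one of two things happens: the curve is not contained in any compact subset of the open set, or it is a closed loop. I would have to rule out the loop, or absorb it into the alternatives. The standard approach is to use evenness together with a nodal property, namely that $\eta$ is strictly decreasing on $(0,\Lambda_*/2)$, which is preserved along the branch. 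I expect this to be only partly achievable here, so I might instead allow periodic branches and note that the alternatives must then be read accordingly.

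Finally, I would translate "leaves every compact set" into (a) or (b). Fix $\delta,\varepsilon$. Solutions in ${\mathcal U}_{\delta,\varepsilon}$ satisfy uniform bounds (\ref{A30aaa})--(\ref{A30ad}), and elliptic Schauder estimates turn these into uniform $C^{2,\alpha}$ bounds after bootstrapping. To get compactness, I would show that these bounds give precompactness in $C^{2,\alpha}$, using an improved $C^{2,\beta}$ bound with $\beta>\alpha$, or alternatively the standard trick of working in $C^{2,\alpha}$ with the smoothing from the equation. Thus the set of solutions in $\overline{{\mathcal U}_{\delta,\varepsilon}}$ is compact, so the branch eventually leaves each such set. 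Now split into cases.
\begin{itemize}
\item[(1)] If $\inf_t\varepsilon_*(\psi[t])>0$, take $\varepsilon_*$ to be this infimum halved. The branch then lies in $\widehat{\mathcal U}_{\varepsilon_*}$ and leaves every ${\mathcal U}_{\delta,\varepsilon_*}$ for large $|t|$, which is (a).
\item[(2)] Otherwise there are $t_j$ with $\varepsilon_*(\psi[t_j])\to0$. Set $\delta_j$ to be the defining level, so that $\psi_y=\delta_j$ at $y_{*j}$ up to sign convention. Property (\ref{Jun14aa}) gives (\ref{Jun14b}), and $\psi[t_j](x_{*j},y_{*j})\to0$ is exactly the definition of $\varepsilon_*$. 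Shrinking $\delta_j\to0$ may be arranged by passing to subsequences, since otherwise we are back in case (1).
\end{itemize}

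The main obstacle is the compactness and Fredholm step near the degenerate regime, together with excluding closed loops. The delicate point within it is the uniform Schauder bound up to the free boundary, which requires control of $\eta'$ and $\psi_y$; this control is exactly what the sets ${\mathcal U}_{\delta,\varepsilon}$ encode.
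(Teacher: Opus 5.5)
\textbf{Gap: the closed-loop alternative is never excluded.} The overall architecture of your plan broadly matches the paper's. Both use the analytic global bifurcation theorem in the form of Theorem \ref{ThJ3a}, the Fredholm index-zero property of the linearization, compactness on an exhaustion by the sets $\overline{{\mathcal U}_{\delta,\varepsilon}}$, and a split into (a)/(b) according to whether $\varepsilon_*(\psi[t])$ stays bounded below. However, you leave open a step the statement cannot do without: excluding the closed-loop alternative $(\beta)$.

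Your fallback of allowing periodic branches and reading the alternatives accordingly does not prove the theorem, because a loop is incompatible with both (a) and (b). A loop is compact and its points are revisited for arbitrarily large $|t|$, each of them lying in some ${\mathcal U}_{\delta,\varepsilon_*}$, which contradicts (a). On the loop, $-\psi_y\geq\delta_0$ on $S$ with uniform $C^2$ control. That keeps $\psi(x_*,y_*)$ bounded away from $0$ once $\delta<\delta_0/2$, which contradicts (b). For the same reason your case (1) is circular: ``the branch eventually leaves each such set'' is precisely alternative $(\alpha)$, which is available only after $(\beta)$ has been excluded.

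The paper supplies this in step (iii). By Assumption (A) and the generalized maximum principle of Sect. \ref{SM2a}, $\gamma(y;\tau_*)>0$ on $(-d,0]$. Hence, by (\ref{A17acz}), the small-amplitude solutions have $\hat\psi_{\hat x}$ of one fixed sign on the open half-period. The maximum-principle argument of \cite{Koz1Loop} then shows that this sign property persists along the whole branch, which rules out a closed loop. That argument rests only on the linear equation $\Delta\psi_x+\omega_p(y,\psi)\psi_x=0$ and the maximum principles of \cite{AN}, so it survives stratification. You identified this nodal strategy but did not carry it out, and the positivity of $\gamma(\cdot;\tau_*)$ is the ingredient to start from.

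\textbf{Secondary point: the compactness step skips the free-boundary difficulty.} You write that Schauder estimates turn the bounds into uniform $C^{2,\alpha}$ bounds after bootstrapping. But the constraints defining ${\mathcal U}_{\delta,\varepsilon}$ only give a Lipschitz free boundary, on which boundary Schauder estimates are not directly available. The paper proceeds in three steps:
\begin{enumerate}
\item It proves the uniform gradient bound of Lemma \ref{LA29a}.
\item It applies the semi-hodograph change $q=x$, $p=-\psi$ on the strip $y_*(x)\le y\le\xi(x)$. There $-\psi_y\ge\delta$, and the strip has $p$-thickness at least $\varepsilon$, which is exactly the role of $\varepsilon$. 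The free boundary becomes the fixed line $p=0$ with a nonlinear Bernoulli condition.
\item The quasilinear estimates of \cite{LU}, followed by \cite{ADN}, give uniform $C^{3,\alpha_1}$ bounds (Lemma \ref{LMa31a} and (\ref{Jun5aa})). Compactness in ${\mathcal X}$ then follows from the compact embedding of $C^{3,\alpha_1}$ into $C^{2,\alpha}$ over a period.
\end{enumerate}
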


\subsection{Laminar solutions}

Laminar solutions of the problem (\ref{K2aa}) are solutions which depend only on $\hat{y}$, when $\eta=0$ and $\hat{\psi}$ and $d$ are considered as unknowns. Then they satisfy
\begin{eqnarray}\label{K2aaa}
&&\frac{1}{d^2}\partial_{\hat{y}}^2
\hat{\psi}+\hat{\omega}(\hat{y},\hat{\psi})=0\;\;\mbox{in $Q$},\nonumber\\
&&\frac{1}{2d^2}\big|\partial_{\hat{y}}\hat{\psi}\big|^2+g\rho(0)d-R=0
\;\;\mbox{for $\hat{y}=d$},\nonumber\\
&&\hat{\psi}=0\;\;\mbox{for $\hat{y}=1$},\nonumber\\
&&\hat{\psi}=-p_0\;\;\mbox{for $\hat{y}=0$}.
\end{eqnarray}
After the change of variable $y=d(\hat{y}-1)$ this problem transforms to the problem
(\ref{J18aa})--(\ref{J18ac}).

\subsection{Fr\'{e}chet derivative}\label{SA15a}

We calculate the Fr\'{e}chet derivative at $(\hat{\psi},\eta)$:
\begin{eqnarray}\label{Ju28a}
&&{\mathcal F}(u,\zeta,\lambda):=\partial_tF(\hat{\psi}+tu,\eta+t\zeta)|_{t=0}\nonumber\\
&&=\Big(\lambda^2\Big(\partial_{\hat{x}}-\frac{\hat{y}\eta'}{\eta+d}
\partial_{\hat{y}}\Big)^2+\Big(\frac{1}{\eta+d}\partial_{\hat{y}}\Big)^2\Big)
u+\tilde{\omega}
-\lambda^2\Big(\frac{\zeta}{\eta+d}\Big)'\hat{y}\partial_{\hat{y}}
\Big(\partial_{\hat{x}}
-\frac{\eta'}{\eta+d}{\hat{y}}
\partial_{\hat{y}}\Big)\hat{\psi}\nonumber\\
&&-\lambda^2\Big(\partial_{\hat{x}}-\frac{\hat{y}\eta'}{\eta+d}\partial_{\hat{y}}\Big)
\Big(\frac{\zeta}{\eta+d}\Big)'\hat{y}\partial_{\hat{y}}\hat{\psi}
-2\frac{\zeta}{(\eta+d)^3}\partial_{\hat{y}}^2\hat{\psi},
\end{eqnarray}
where
$$
\tilde{\omega}=\omega_\psi(y,\psi)u+\omega_y\hat{y}\zeta,
$$
and
\begin{eqnarray}\label{Ju28aa}
&&{\mathcal G}(u,\zeta,\lambda):=\partial_tG(\hat{\psi}+tu,\xi+t\zeta)|_{t=0}=
\lambda^2\Big(\partial_{\hat{x}}\hat{\psi}-\frac{\hat{y}\eta'}{\eta+d}\partial_{\hat{y}}
\hat{\psi}\Big)\Big(\partial_{\hat{x}}u
-\frac{\hat{y}\eta'}{\eta+d}\partial_{\hat{y}}u\Big)\nonumber\\
&&\!\!\!+\!\!\frac{1}{(\eta+d)^2}\partial_{\hat{y}}\hat{\psi}\partial_{\hat{y}}u
\!\!+\!\!g\rho(0)\zeta
\!\!-\!\!\lambda^2\Big(\partial_{\hat{x}}\hat{\psi}-\frac{\hat{y}\eta'}{\eta+d}\partial_{\hat{y}}
\hat{\psi}\Big)\Big(\frac{\zeta}{\eta+d}\Big)'\hat{y}\partial_{\hat{y}}\hat{\psi}
\!\!-\!\frac{\zeta}{(\eta+d)^3}\hat{\psi}_{\hat{y}}^2.
\end{eqnarray}
Here $u=0$ for $\hat{y}=0$ and $\hat{y}=1$. We introduce
\begin{equation}\label{A16a}
\omega_*=\omega_*(y,\psi)=\omega_\psi(y,\psi)=-\beta'(\psi)+gy\rho^{''}(-\psi).
\end{equation}
Then
\begin{equation}\label{A16aa}
\tilde{\omega}=\omega_*(y,\psi)u+\omega_y\hat{y}\zeta\;\;\mbox{and}\;\;\omega_y=-g\rho'(-\psi).
\end{equation}

Fr\'{e}chet derivative at the laminar solution $(\Psi,d)$ is
\begin{equation}\label{Ju28az}
{\mathcal F}(u,\zeta,\lambda)=
\Big(\lambda^2\partial_{\hat{x}}^2+\frac{1}{d^2}\partial_{\hat{y}}^2\Big)
u+\tilde{\omega}
-\lambda^2\frac{\zeta^{''}}{d}\hat{y}\partial_{\hat{y}}\hat{\Psi}
-2\frac{\zeta}{d^3}\partial_{\hat{y}}^2\hat{\Psi}
\end{equation}
and
\begin{equation}\label{Ju28aaz}
{\mathcal G}(u,\zeta,\lambda)=
\frac{1}{d^2}\partial_{\hat{y}}\hat{\Psi}\partial_{\hat{y}}u
+g\rho(0)\zeta
-\frac{\zeta}{d^3}\hat{\Psi}_{\hat{y}}^2.
\end{equation}

Let us introduce the transformation
\begin{equation}\label{Au2a}
v(x,y)=u(\hat{x},\hat{y})-\Psi_{y}(y)\frac{(y+d)\zeta}{d}.
\end{equation}
Since $u=0$ for $\hat{y}=0$ and $\hat{y}=1$,
\begin{equation}\label{A14a}
v(x,-d)=0\;\;\mbox{and}\;\;v(x,0)=-\Psi_y(0)\zeta(x).
\end{equation}

\begin{lemma}\label{LA17a} (i) Assume that the pair $(\Psi,d)$  satisfy (\ref{J18aa}). If the function  $v$ is given by (\ref{Au2a}) then
\begin{equation}\label{A14aa}
(\partial_x^2+\partial_y^2)v+\omega_*v={\mathcal F}(u,\zeta),
\end{equation}
where $\omega_*$ is defined by (\ref{A16a}).

(ii) Furthermore
\begin{equation}\label{A14ab}
\Psi_yv_y+\hat{\sigma}\zeta={\mathcal G}(u,\zeta)\;\;
\mbox{on ${\mathcal S}_\xi$},
\end{equation}
where
$$
\hat{\sigma}=\psi_y\psi_{yy}+g\rho(0).
$$

\end{lemma}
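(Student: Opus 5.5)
The proof is a direct computation. I will rewrite $\mathcal F$ and $\mathcal G$ at the laminar solution in the original variables $(x,y)$, and then subtract the contribution of the correction term $w:=\Psi_y(y)(y+d)\zeta/d$ from (\ref{Au2a}). At the laminar flow $\eta=0$, so $\hat y=(y+d)/d$ and $\hat x=\lambda x$. This gives $\lambda^2\partial_{\hat x}^2=\partial_x^2$, $d^{-2}\partial_{\hat y}^2=\partial_y^2$ and $\hat y\partial_{\hat y}\hat\Psi=(y+d)\Psi_y$. Also $\lambda^2\zeta''(\hat x)=\zeta_{xx}$ when $\zeta$ is regarded as a function of $x$.

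\textbf{Step 1: $\mathcal F$ in the original variables.} Inserting these identities into (\ref{Ju28az}) and (\ref{A16aa}) gives
\begin{equation*}
\mathcal F(u,\zeta)=\Delta u+\omega_* u+\omega_y\frac{(y+d)\zeta}{d}-\frac{\zeta_{xx}}{d}(y+d)\Psi_y-\frac{2\zeta}{d}\Psi_{yy}.
\end{equation*}

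\textbf{Step 2: the equation for $w$.} I differentiate the laminar equation (\ref{J18aa}) in $y$. This yields $\Psi_{yyy}+\omega_y+\omega_*\Psi_y=0$, where $\omega_y$ denotes the partial derivative in the first argument, $-g\rho'(-\Psi)$. Next I compute $\partial_y^2[(y+d)\Psi_y]=2\Psi_{yy}+(y+d)\Psi_{yyy}$. Together these give
\begin{equation*}
\Delta w+\omega_* w=\frac{\zeta_{xx}}{d}(y+d)\Psi_y+\frac{\zeta}{d}\big(2\Psi_{yy}-(y+d)\omega_y\big).
\end{equation*}
Comparing with Step 1, the right-hand side is exactly $\Delta u+\omega_*u-\mathcal F(u,\zeta)$. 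Hence $\Delta v+\omega_*v=\mathcal F(u,\zeta)$, which is (\ref{A14aa}).

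\textbf{Step 3: the boundary relation at $y=0$ ($\hat y=1$).} In the original variables (\ref{Ju28aaz}) reads
\begin{equation*}
\mathcal G=\Psi_y u_y+g\rho(0)\zeta-\frac{\zeta}{d}\Psi_y^2.
\end{equation*}
Since $\partial_y[(y+d)\Psi_y]|_{y=0}=\Psi_y(0)+d\Psi_{yy}(0)$, we get
\begin{equation*}
\Psi_y v_y=\Psi_y u_y-\frac{\zeta}{d}\Psi_y^2-\zeta\Psi_y\Psi_{yy}.
\end{equation*}
Substituting this into the expression for $\mathcal G$ gives $\mathcal G=\Psi_y v_y+(\Psi_y\Psi_{yy}+g\rho(0))\zeta=\Psi_y v_y+\hat\sigma\zeta$, which is (\ref{A14ab}).

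The only delicate point is bookkeeping. One must track the chain-rule factors $\lambda$ and $1/d$ correctly, and use the differentiated laminar equation so that the $\omega_y$ term and the $\Psi_{yyy}$ term cancel. I expect Step 2 to be the main place where errors could occur; everything else is routine.
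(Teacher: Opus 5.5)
Your proposal is correct and follows essentially the same route as the paper: rewrite $\mathcal F$ and $\mathcal G$ at the laminar flow in the $(x,y)$ variables, subtract the contribution of the correction term $\Psi_y(y)(y+d)\zeta/d$, and use the differentiated laminar equation $(\partial_y^2+\omega_*)\Psi_y+\omega_y=0$ to cancel the $\Psi_{yyy}$ and $\omega_y$ terms. The boundary identity then follows from $\partial_y[(y+d)\Psi_y]|_{y=0}=\Psi_y(0)+d\Psi_{yy}(0)$, and your bookkeeping of the factors $\lambda$ and $1/d$ checks out.
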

\begin{proof} (i) Using relations (\ref{Ju28a}) and (\ref{Okt5aa}), we get
\begin{eqnarray*}
&&(\partial_x^2+\partial_y^2)v+\omega_*v=\Big(\partial_{\hat{x}}^2
+\frac{1}{d^2}\partial_{\hat{y}}^2 \Big)u(\hat{x},\hat{y})+\tilde{\omega}\\
&&-\Big(\partial_x^2+\partial_y^2+\omega_*\Big)\Big(\Psi_{y}(y)
\frac{(y+d)\zeta}{d}\Big)-\frac{y+d}{d}\omega_y\zeta\\
&&=\Big(\partial_{\hat{x}}^2
+\frac{1}{d^2}\partial_{\hat{y}}^2 \Big)u(\hat{x},\hat{y})+\tilde{\omega}
\\
&&-\frac{(y+d)\zeta}{d}\Big(\partial_y^2\Psi_{y}(y)
+\omega_*\Psi_{y}(y)+\omega_y\Big)
-\frac{y+d}{d}\Psi_y\zeta^{''}-2\Psi_{yy}\frac{\zeta}{d}.
\end{eqnarray*}
 Comparing this with the second line in (\ref{Ju28a}) and using that $(\partial_y^2
+\omega_*)\Psi_{y}-g\rho'(-\Psi)=0$, we arrive at the assertion (i).

(ii) We have
\begin{eqnarray*}
&&\Psi_yv_y+\hat{\sigma}\zeta=\psi_yu_y+\hat{\sigma}\zeta
-(\Psi_y\Psi_{yy})\zeta-\Psi_y^2\frac{\zeta}{d}\\
&&=\Psi_yu_y+g\rho(0)\zeta-\Psi_y^2\frac{\zeta}{d}.
\end{eqnarray*}
This together with (\ref{Ju28aa}) leads to the required proof of (ii).
\end{proof}

The element of the  kernel of the Fr\'{e}chet derivative in $\hat{x}$, $\hat{y}$ variables satisfy
$$
{\mathcal F}(u,\zeta;\lambda)=0,\;\;{\mathcal G}(u,\zeta;\lambda)=0,
$$
where $u=0$ for $\hat{y}=0$ and $\hat{y}=1$. According to the previous lemma we have
\begin{eqnarray}\label{A15aa}
&&(\partial_x^2+\partial_y^2)v+\omega_*v=0
\;\;\mbox{in $D_\xi$},\nonumber\\
&&\Psi_yv_y+\hat{\sigma}\zeta=0\;\;
\mbox{for $y=0$},\nonumber\\
&&v(x,-d)=0\;\;\mbox{and}\;\;v(x,0)=-\Psi_y(0)\zeta(x).
\end{eqnarray}

Let us introduce the function $\gamma(y;\tau)$  solving the problems
\begin{eqnarray}\label{A14b}
&&(-\tau^2+\partial_y^2)\gamma+\omega_*\gamma=0
\;\;\mbox{for $y\in (-d,0)$}\nonumber\\
&&\gamma(-d;\tau)=0,\;\; \gamma(0;\tau)=1.
\end{eqnarray}
If we are looking for solutions of the  problem in the form
\begin{equation}\label{A15ab}
v(x,y)=\gamma(y;\tau)\cos(\tau x)\;\;\mbox{and}\;\;\zeta(x)=\alpha\cos(\tau x),
\end{equation}
then
 we obtain the following boundary condition for $\gamma$
\begin{equation}\label{A14bc}
\Psi_y\gamma_y-\frac{\hat{\sigma}}{\Psi_y}=0\;\;\mbox{for $y=0$}.
\end{equation}
Here we have used also the second relation in (\ref{A14a}). The relation (\ref{A14bc}) represents the dispersion equation for finding the frequency $\tau$. The corresponding element of the kernel is given by (\ref{A15ab}), where
$\alpha=-1/\Psi_y(0)$.

\subsection{Small amplitude water waves. Proof of Theorem \ref{TA17a}}

Our proof is based on the classical Crandall--Rabinowitz Theorem. It deals with the following problem. Let ${\mathcal X}$, ${\mathcal Y}$ be Banach spaces and  $I$ be a finite or infinite interval in $\Bbb R$. Let also ${\mathcal U}$ be an open subset in ${\mathcal X}\times I$ and
\begin{equation}\label{Ma12a}
\widehat{\mathcal F}\; :\; {\mathcal U} \rightarrow {\mathcal Y}
\end{equation}
be a continuous function. We are interested in solutions of the equation
\begin{equation}\label{Ma31b}
\widehat{\mathcal F}(\Phi,\lambda)=0,\;\;(\Phi,\lambda)\in {\mathcal U}
\end{equation}
such that the derivatives
\begin{equation}\label{Ma12aa}
\partial_\Phi \widehat{\mathcal F},\;\;\partial_{\Phi\Phi}\widehat{\mathcal F},\;\; \partial_\lambda\widehat{\mathcal F}\;\;\mbox{ and}\;\; \partial_{\lambda\Phi}\widehat{\mathcal F}
\end{equation}
 are also continuous.

\begin{theorem}\label{TA17aaz}(Crandall--Rabinowitz, \cite{CR}) Let the derivatives of the operator $\widehat{\mathcal F}$:
\begin{equation}\label{Ma12aaz}
\partial_\Phi \widehat{\mathcal F},\;\;\partial_{\Phi\Phi}\widehat{\mathcal F},\;\; \partial_\lambda\widehat{\mathcal F}\;\;\mbox{ and}\;\; \partial_{\lambda\Phi}\widehat{\mathcal F}
\end{equation}
 are continuous. Suppose that

(i)  $\widehat{\mathcal F}(0,\lambda) = 0$ holds for all $(0,\lambda)\in {\mathcal U}$,

(ii) For some $\lambda_*\in I$ such that $(0,\lambda_*)\in {\mathcal U}$ the operator
$\partial_\Phi\widehat{\mathcal F}(0,\lambda_*)$ is a Fredholm operator with zero index and the null-space of $\widehat{\mathcal F}_\Phi(0,\lambda_*)$ is
one-dimensional and generated by $\Phi^0$ and
\begin{equation}\label{Ma12b}
\widehat{\mathcal F}_{\Phi,\lambda}(0,\lambda_*)\Phi^0\;\;\mbox{ does
not belong to the range of}\;\; \widehat{\mathcal F}_\Phi(0,\lambda_*).
\end{equation}

If (i) and (ii) hold, then a sufficiently small $\varepsilon > 0$ exists and a $C^1$ curve of solutions to (\ref{Ma31b}):
$$
\{(\Phi(t),\lambda(t)) : |t| <\varepsilon\} \subset  {\mathcal X}\times I,
$$
which bifurcates from $(0,\lambda_*)$. Moreover, for pairs belonging to this curve the following properties hold:
$$
\Phi(t) = t\Phi^0 + o(t)\;\;\mbox{ when $0 < |t| < \varepsilon$},
$$
and $\lambda(0)=\lambda_*$,
$$
\{(\lambda,\Phi) \in V : \lambda\neq \lambda_*\;\mbox{ and}\; \widehat{\mathcal F}(\lambda,\Phi) = 0\} = \{(\lambda(t),\Phi(t)) : 0 < |t| < \varepsilon\},
$$
where
$V \subset {\mathcal X} \times I$ is a certain neighbourhood of $(0,\lambda_*)$.
\end{theorem}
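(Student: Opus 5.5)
The plan is the classical Lyapunov--Schmidt argument combined with a rescaling that divides out the trivial branch, followed by one application of the implicit function theorem. Set $L=\widehat{\mathcal F}_\Phi(0,\lambda_*)$. By (ii), $\ker L=\mathrm{span}\{\Phi^0\}$ and, since the index is zero, $\mathrm{Ran}\,L$ is closed of codimension one. We fix a closed complement ${\mathcal X}_0$ with ${\mathcal X}={\mathrm{span}}\{\Phi^0\}\oplus{\mathcal X}_0$. Then $L|_{{\mathcal X}_0}:{\mathcal X}_0\to \mathrm{Ran}\,L$ is a bounded bijection, hence has a bounded inverse by the open mapping theorem. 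By (\ref{Ma12b}), ${\mathcal Y}=\mathrm{Ran}\,L\oplus\mathrm{span}\{\widehat{\mathcal F}_{\Phi\lambda}(0,\lambda_*)\Phi^0\}$.

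\emph{Rescaled map.} We look for solutions of the form $\Phi=t(\Phi^0+w)$ with $w\in{\mathcal X}_0$, and define
$$
{\mathcal G}(t,w,\lambda)=\int_0^1\widehat{\mathcal F}_\Phi\big(st(\Phi^0+w),\lambda\big)(\Phi^0+w)\,ds .
$$
Because $\widehat{\mathcal F}(0,\lambda)=0$ by (i), we have ${\mathcal G}=t^{-1}\widehat{\mathcal F}(t(\Phi^0+w),\lambda)$ for $t\neq0$, and ${\mathcal G}(0,w,\lambda)=\widehat{\mathcal F}_\Phi(0,\lambda)(\Phi^0+w)$. So for $t\ne0$, zeros of ${\mathcal G}$ are exactly nontrivial zeros of $\widehat{\mathcal F}$. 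The continuity of $\widehat{\mathcal F}_{\Phi\Phi}$, $\widehat{\mathcal F}_{\Phi\lambda}$ and $\widehat{\mathcal F}_\Phi$ gives, by differentiating under the integral, that ${\mathcal G}$ is $C^1$ in $(t,w,\lambda)$ near $(0,0,\lambda_*)$. This is the step I expect to need the most care, in particular checking the $t$-derivative and the $\lambda$-derivative at $t=0$, and it is exactly where the regularity hypotheses (\ref{Ma12aaz}) are used. At the base point we have ${\mathcal G}(0,0,\lambda_*)=L\Phi^0=0$.

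\emph{Implicit function theorem.} The partial derivative of ${\mathcal G}$ in $(w,\lambda)$ at $(0,0,\lambda_*)$ is the map
$$
(w,\mu)\mapsto Lw+\mu\,\widehat{\mathcal F}_{\Phi\lambda}(0,\lambda_*)\Phi^0
$$
from ${\mathcal X}_0\times\Bbb R$ to ${\mathcal Y}$. The direct sum decompositions above show it is bijective, hence an isomorphism. The implicit function theorem therefore gives $\varepsilon>0$ and $C^1$ functions $w(t)$, $\lambda(t)$ with $w(0)=0$ and $\lambda(0)=\lambda_*$, solving ${\mathcal G}=0$ and unique in a neighbourhood. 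Setting $\Phi(t)=t(\Phi^0+w(t))$ produces the $C^1$ curve, and $\Phi(t)=t\Phi^0+o(t)$ follows from $w(t)\to0$.

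\emph{Local uniqueness.} Let $(\Phi,\lambda)$ be a solution near $(0,\lambda_*)$ with $\Phi\neq0$, and write $\Phi=t\Phi^0+z$ with $z\in{\mathcal X}_0$. Let $Q$ be the projection onto $\mathrm{Ran}\,L$ along the transversal direction. We solve $Q\widehat{\mathcal F}(t\Phi^0+z,\lambda)=0$ for $z=z(t,\lambda)$ by the implicit function theorem, which applies because $QL|_{{\mathcal X}_0}$ is invertible. Since $\widehat{\mathcal F}(0,\lambda)=0$, uniqueness gives $z(0,\lambda)=0$, and hence $\|z\|\le C|t|$. It follows that $t\neq0$ (otherwise $\Phi=0$) and that $w=z/t$ is small. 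Then $(t,w,\lambda)$ lies in the neighbourhood where the implicit function theorem gave uniqueness, so $(\Phi,\lambda)=(\Phi(t),\lambda(t))$. This proves the description of the solution set in $V$ for $\lambda\ne\lambda_*$, since in particular every nontrivial solution near $(0,\lambda_*)$ lies on the curve.
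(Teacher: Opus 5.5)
The paper gives no proof of this theorem; it is quoted from Crandall--Rabinowitz. Your rescaling $\Phi=t(\Phi^0+w)$ followed by the implicit function theorem in $(w,\lambda)$ is exactly their argument, and the existence part is correct, including the $C^1$ regularity of ${\mathcal G}$ and the invertibility of $(w,\mu)\mapsto Lw+\mu\,\widehat{\mathcal F}_{\Phi\lambda}(0,\lambda_*)\Phi^0$.

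The local uniqueness step has a gap. From $\|z(t,\lambda)\|\le C|t|$ you only get $\|w\|=\|z\|/|t|\le C$. That does not place $w$ in the small neighbourhood of $0$ where the first implicit function theorem gives uniqueness; you need the constant itself to be small.

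This is easily supplied. Differentiate $Q\widehat{\mathcal F}(t\Phi^0+z(t,\lambda),\lambda)=0$ in $t$ at $(0,\lambda_*)$. This gives $L\,\partial_t z(0,\lambda_*)=-L\Phi^0=0$, so $\partial_t z(0,\lambda_*)=0$, since it lies in ${\mathcal X}_0$. Then $z(t,\lambda)=t\int_0^1\partial_t z(\theta t,\lambda)\,d\theta$ yields $\|z\|\le\epsilon|t|$, with $\epsilon\to0$ as the neighbourhood shrinks.

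Alternatively, you can skip the second implicit function theorem. Write $0=\widehat{\mathcal F}(\Phi,\lambda)=Lz+\int_0^1(\widehat{\mathcal F}_\Phi(s\Phi,\lambda)-L)\Phi\,ds$. The bounded inverse of $L|_{{\mathcal X}_0}$ then gives $\|z\|=o(|t|+\|z\|)$. Hence, for $\Phi\neq0$, both $t\neq0$ and $\|z\|=o(|t|)$.

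You should also state the reverse inclusion. For $t\neq0$ we have $\Phi(t)\neq0$ because $\Phi^0\notin{\mathcal X}_0$, and $\varepsilon$ must be taken so that the curve stays in $V$.

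Finally, your last sentence claims the description ``for $\lambda\neq\lambda_*$'', but the set identity as printed cannot hold. By (i), the trivial zeros $(0,\lambda)$ with $\lambda\neq\lambda_*$ lie in $V$ and belong to the left-hand set, yet they are not on the curve. What your argument proves, and what the Crandall--Rabinowitz theorem actually asserts, is the identity with the condition $\Phi\neq0$ in place of $\lambda\neq\lambda_*$. Say this explicitly rather than claiming the printed version.
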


For application of this theorem we choose $d=d_*$, $\lambda_*=1$, $I=\Bbb R_+$ and
$$
{\mathcal X}=\{(\Phi=(w,\zeta)\;:\; w\in C^{2,\alpha}_{0,e,\Lambda_*}(\overline{Q}), \;w(\cdot,1)=0,\;\zeta\in C^{2,\alpha}_{e,\Lambda_*}(\Bbb R)\},
$$
$$
{\mathcal Y}=C^{\alpha}_{e,\Lambda_*}(\overline{Q})\times C^{1,\alpha}_{e,\Lambda_*}(\Bbb R).
$$
Let ${\mathcal U}$ be the set of functions $\Phi=(w,\zeta)$ in ${\mathcal X}$, given by (\ref{A30b}).

We introduce
$$
\widehat{\mathcal F}(\Phi,\lambda)=(F(\Psi_*+w,\zeta,\lambda),G(\Psi_*+w,\zeta,\lambda)).
$$
 It is clear that the operator
 \begin{equation}\label{Ma31ab}
 \widehat{\mathcal F}(\Phi,\lambda)\;:\;{\mathcal U}\rightarrow {\mathcal Y}
 \end{equation}
is continuous together with derivatives (\ref{Ma12aa}) and $\widehat{\mathcal F}(0,\lambda)=0$.

It remains to verify the condition (ii) of the above theorem.

According to Sect. \ref{SA15a} the Fr\'{e}chet derivative of the operator (\ref{Ma31aa}) at a laminar solution $(\Psi_*,d_*)$ in the variables $(x,y)$ is given by Lemma \ref{Ju28az}.
Therefore the condition for existence of a positive $\tau$ is given by equation (\ref{A14bc}), which have a solution for a certain parameter $s$, when for example $\rho'(-\Psi)\leq 0$. This solution is unique and it was denoted by $\tau_*$ in Theorem \ref{TA17b}. The corresponding eigenfunction is $u_*(y)=\gamma(y;\tau_*)\cos(\tau_*\hat{x})$.

 Simplicity of the eigenvalue follows from the simplicity of the eigenvalue of the problem (\ref{F21a}) and monotonicity of the function $\sigma(\tau)$.
 The Fr\'{e}chet derivative is a Fredholm operator since it is a self-adjoint operator in $(x,y)$ variables.

 In $\hat{x}$, $y$ variables the Fr\'{e}chet derivative has the form
\begin{eqnarray}\label{A17ca}
&&(\partial_y^2-\lambda^2\tau^2)u+\omega_p(y,\Psi)u
\;\;\mbox{in $(-d,0)$},\nonumber\\
&&\Psi_yu_y-\kappa u\;\;\mbox{for $y=0$},\nonumber\\
&& u(-d)=0.
\end{eqnarray}
Therefore,
$$
\widehat{\mathcal F}_{\Phi,\lambda}(0,\lambda_*)\Phi^0=-2\tau_*^2u_*
$$
and to prove (\ref{Ma12b}), we must show that $-2\tau_*^2\int_{-d_*}^0u_*u_*dy\neq 0$, which is certainly true.

Thus, all assumptions in the Crandall--Rabinowitz Theorem are verified and its application leads to the proof of Theorem \ref{TA17a}.

\subsection{Water waves of large amplitude}

Here we use the same notations as in Theorem \ref{TA17aaz}. %The set ${\mathcal U}$ is given by (\ref{A30b}) and
Introduce
\begin{equation}\label{D5b}
{\mathcal S}=\{(\Phi,\lambda)\in{\mathcal U}\,:\, {\mathcal F}(\Phi,\lambda)=0\}.
\end{equation}
Let
\begin{equation}\label{Ma12ba}
{\mathcal B}_\epsilon=\{(\Phi(t),\lambda(t))\,:\,t\in (-\epsilon,\epsilon)\}
\end{equation}
be the bifurcating branch of small amplitude Stokes waves constructed in Theorem \ref{TA17aaz}.

We use the following version of the global bifurcation theorem taken from \cite{CSrVar}, which is a corrected and modified version of the bifurcation theorem from \cite{BTT}.

\begin{theorem}\label{ThJ3a} Suppose that all conditions of Theorem \ref{TA17aaz} hold and for some sequence ${\mathcal K}_\delta$, $\delta\in (0,\delta_*)$, of bounded closed subsets of ${\mathcal U}$ with ${\mathcal U}=\bigcup_{\delta\in (0,\delta_*)}{\mathcal K}_\delta$, the set ${\mathcal S}\cap {\mathcal K}_\delta$ is compact for each $\delta\in (0,\delta_*)$ and the Frech\'{e}t derivatives of the operator ${\mathcal F}$ is a Fredholm operator of index zero at all point from ${\mathcal S}$.
Then there exists a continuous curve- ${\mathcal B}$, which extends ${\mathcal B}_\epsilon$ as follows

\medskip
(a) ${\mathcal B}=\{(\Phi(t),\lambda(t))\,:\, t\in \Bbb R\}\subset {\mathcal U}$,
where $(\Phi,\lambda): \Bbb R\rightarrow {\bf X}\times \Bbb R_+$ is continuous;

\medskip
(b) ${\mathcal B}_\varepsilon\subset {\mathcal B}\subset {\mathcal S}$;

\medskip
(c) ${\mathcal B}$ has a real-analytic reparametrization locally around each of its points;

\medskip
(d) One of the following alternatives occurs:

$(\alpha)$ for every  $\delta\in (0,\delta_*)$, there exists $t_\delta>0$ such that $(\Phi(t);\lambda(t))$ does not belong to ${\mathcal K}_\delta$ for all $t\in\Bbb R$ with $|t|>t_\delta$;

$(\beta)$ there exists $T>0$ such that $(\Phi(t+T),\lambda(t+T))=(\Phi(t),\lambda(t))$ for all $t>0$.

Moreover, such a curve of solutions to $\widehat{\mathcal F}(\Phi,\lambda)=0$ having the properties (a)-(d) is unique (up to reparametrization).

\end{theorem}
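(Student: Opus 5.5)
This statement is the abstract analytic global bifurcation theorem of Buffoni--Toland, in the corrected form of \cite{CSrVar}. I would prove it with the standard analytic-variety and distinguished-arc argument rather than with any structure specific to water waves. Throughout, the operator $\widehat{\mathcal F}$ is assumed real analytic on ${\mathcal U}$ (here this comes from the real analyticity of $\rho$ and $\beta$).

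The first step is local structure. At every point $(\Phi_0,\lambda_0)\in{\mathcal S}$ the derivative $\partial_\Phi\widehat{\mathcal F}$ is Fredholm of index zero, so a Lyapunov--Schmidt reduction applies. Split ${\mathcal X}=\ker\oplus{\mathcal X}_1$ and ${\mathcal Y}=\mathrm{ran}\oplus{\mathcal Y}_1$, solve the range component by the analytic implicit function theorem, and obtain the following: near $(\Phi_0,\lambda_0)$, ${\mathcal S}$ is analytically homeomorphic to the zero set of an analytic map between finite-dimensional spaces. Hence ${\mathcal S}$ is locally a finite-dimensional real analytic variety. Near points of the branch under construction its one-dimensional part consists of finitely many analytic arcs (half-branches). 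Each of these admits a local real-analytic parametrization, by the curve selection lemma and Puiseux-type expansions. At $(0,\lambda_*)$ the Crandall--Rabinowitz curve ${\mathcal B}_\epsilon$ of Theorem \ref{TA17aaz} is one such analytic arc, and it is transversal to the trivial line.

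The second step is to build ${\mathcal B}$ from distinguished arcs. Start from the positive half of ${\mathcal B}_\epsilon$ and continue it along the one-dimensional analytic set $\{(\Phi,\lambda)\in{\mathcal S} : \ker\partial_\Phi\widehat{\mathcal F}$ is trivial modulo the tangent$\}$ for as long as it stays a regular analytic curve. When the continuation reaches an endpoint that is a singular point of ${\mathcal S}$, use the local finite structure to choose a next arc by a fixed rule. This is the ``route'' of Buffoni--Toland: an arc is taken that is not the incoming one and, when possible, one along which the parametrization direction is consistent. Iterating gives a countable chain of arcs, and concatenating them yields a continuous map $t\mapsto(\Phi(t),\lambda(t))$, $t\ge0$, which is locally analytically reparametrizable. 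Doing the same for $t<0$ gives (a)--(c).

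The third step is the dichotomy (d). Suppose $(\alpha)$ fails for some $\delta$, so the curve returns to ${\mathcal K}_\delta$ for arbitrarily large $|t|$. Then, since ${\mathcal S}\cap{\mathcal K}_\delta$ is compact, one obtains an accumulation point $(\Phi_\infty,\lambda_\infty)\in{\mathcal S}$. Near this point only finitely many analytic arcs exist, and the arcs of the chain are pairwise distinct unless a repetition occurs. Infinitely many returns therefore force some arc to be traversed twice, and the deterministic route then repeats, giving periodicity $(\beta)$. Uniqueness up to reparametrization follows because the route is determined at each step by the local structure.

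I expect the main obstacle to be this last step, namely excluding the possibility that the chain of arcs accumulates at a point without repeating. This is exactly where the original Buffoni--Toland argument needed correction. I would handle it by showing that each arc in the chain has length bounded below on ${\mathcal K}_\delta$, using uniform Łojasiewicz-type estimates near the compact set ${\mathcal S}\cap{\mathcal K}_\delta$ together with the finite local arc count. Infinitely many arcs in ${\mathcal K}_\delta$ must then revisit a neighbourhood containing only finitely many arcs.
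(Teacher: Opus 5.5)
The paper does not prove this statement. It imports it from Constantin--Strauss--Varvaruca, a corrected form of Buffoni--Toland. The paragraph headed ``Proof of Theorem~\ref{ThJ3a}'' only checks the hypotheses for the water-wave operator: Fredholm index zero via the self-adjoint problem (\ref{Se28b}), compactness of $\mathcal{K}_{\delta,\varepsilon}\cap\mathcal{O}$ from $C^{3,\alpha_1}$ bounds, and exclusion of $(\beta)$. Reconstructing the Buffoni--Toland argument is therefore a more self-contained route than the paper's.

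You are right to add real analyticity of $\widehat{\mathcal F}$. The statement as printed only assumes Crandall--Rabinowitz regularity, under which (c) cannot be expected. Your overall architecture is the right one: Lyapunov--Schmidt reduction, local finite structure, distinguished arcs, and compactness for the dichotomy.

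The central step, however, lacks its key ingredient. Your selection rule at a singular endpoint is: take an arc other than the incoming one and, \emph{when possible}, one with consistent direction. That rule does not give (c). Nor does it give the determinism you later use for $(\beta)$, or the uniqueness claim. If an arbitrary choice is made whenever it is ``not possible'', none of these follows.

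What is needed is a pairing lemma. Let the regular set be the points of $\mathcal S$ where $\partial_\Phi\widehat{\mathcal F}$ is invertible, or where the full derivative is onto. Let $h$ be a half-branch at $P$ of its closure. After Lyapunov--Schmidt reduction, the smallest analytic germ containing $h$ is an irreducible curve germ in $\mathcal S$. It has an injective real-analytic (Puiseux) parametrization $\gamma:(-\epsilon,\epsilon)\to{\mathcal X}\times\Bbb R$ with $\gamma(0)=P$ and $h=\gamma((0,\epsilon))$. Then:
\begin{itemize}
\item $\gamma((-\epsilon,0))\subset\mathcal S$ by analyticity;
\item $\gamma((-\epsilon,0))$ lies in the regular set, because the reduced Jacobian determinant (or sum of squared maximal minors) is analytic along $\gamma$ and not identically zero, being nonzero for $t>0$.
\end{itemize}
So every half-branch has exactly one partner $h'\neq h$. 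Continuation is \emph{always} possible, and the route must pass from $h$ to $h'$. This single rule yields three things:
\begin{itemize}
\item a local analytic reparametrization at every point, which is (c);
\item a route that is deterministic in both directions, since the pairing is a fixed-point-free involution;
\item uniqueness, since any curve with property (c) is forced to follow it.
\end{itemize}

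With this lemma, the last step needs no \L{}ojasiewicz-type lower bounds on arc lengths. Suppose $(\alpha)$ fails for some $\delta$. Compactness of $\mathcal S\cap\mathcal K_\delta$ gives a limit point $P$. In a small ball about $P$, the closure of the regular set consists of finitely many half-branches, and $P$ is its only singular point. Parametrize each arc with compact closure by a bounded interval. Then each passage through the ball is a bounded parameter interval using one incoming and one outgoing half-branch. Hence two distinct passages share a half-branch.
\begin{itemize}
\item If they traverse it in the same direction, determinism gives $(\beta)$.
\item If they traverse it in opposite directions, the curve would be symmetric about an intermediate parameter and so would reverse itself at some point; the fixed-point-free pairing excludes this.
\end{itemize}

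You should also treat distinguished arcs that do not terminate in $\mathcal U$, parametrizing them by a half-line. By the same local-finiteness argument such an arc eventually leaves every $\mathcal K_\delta$, and this is how alternative $(\alpha)$ arises.
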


In order to apply this theorem, we choose
\begin{equation}\label{Jun16a}
{\mathcal K}_\delta =\overline{{\mathcal U}_\delta},
\end{equation}
where ${\mathcal U}_\delta$ is defined by (\ref{A30ac}). Then the formula 
\begin{equation}\label{Jun16aa}
{\mathcal U}=\bigcup_\delta \overline{{\mathcal U}_\delta}
\end{equation}
holds because of (\ref{A30b}) and monotonicity of ${\mathcal U}_\delta$ with respect to $\delta$.
 We note also that ${\mathcal S}$ coincides with ${\mathcal O}$ given by (\ref{A30ba}).%We take ${\mathcal U}$ defined by (\ref{A30b}). Then ${\mathcal S}$ coincides with ${\mathcal O}$ given by (\ref{A30ba}).

Let
\begin{equation}\label{A29a}
M=\max_{(x,y)\in \overline{D}}\psi(x,y),
\end{equation}
\begin{equation}\label{Ma27a}
m=\min_{x\in \Bbb R}(\xi(x)+d)
\end{equation}
and
\begin{equation}\label{A29aa}
L=\max_x|\xi'(x)|.
\end{equation}
It is clear that $M\geq |p_0|$.

The following lemma will be important in proving of compactness property, which is one of the  assumptions in the above theorem.
\begin{lemma}\label{LA29a}
There exists $C_*=C_*(C_1,L,M,m)$ depending on the constant $C_1$ from the inequalities (\ref{F11a})   and the constants $L$, $M$ and $m$, such that if a $\Lambda$-- periodic function $(\psi,\xi)$ solves the problem (\ref{J18aa})--(\ref{J18ac}), then the inequality
\begin{equation}\label{Ma27aa}
\psi_x^2(x,y)+\psi_y^2(x,y)\leq C_*
\end{equation}
 holds for all $(x,y)\in \overline{D}$.
\end{lemma}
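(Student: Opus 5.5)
We follow the standard approach for gradient bounds in steady water waves: a maximum principle applied to a suitable auxiliary function, combined with the Bernoulli condition on the free surface and elliptic estimates near the bottom.

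\emph{Step 1: the free surface.} On $S_\xi$ the Bernoulli condition (\ref{Ma13bc}) gives directly
$$
|\nabla\psi|^2=2R-2g\rho(0)(\xi+d)\leq 2R .
$$
Here $R$ is fixed, since it is given by (\ref{A17a}).

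\emph{Step 2: the interior.} Set
$$
P_0=\tfrac12|\nabla\psi|^2+\Theta(\psi),
$$
where $\Theta$ is chosen so that $P_0$ satisfies an elliptic differential inequality without a zeroth-order bad term. The natural choice is $\Theta'(\psi)=\omega$, frozen in $y$ as far as possible. Using $\Delta\psi=-\omega(y,\psi)$, a direct computation gives
$$
\Delta\tfrac12|\nabla\psi|^2=|D^2\psi|^2-\omega_p|\nabla\psi|^2-\omega_y\psi_y .
$$
Together with $|D^2\psi|^2\geq\tfrac12(\Delta\psi)^2$ and the Cauchy--Schwarz inequality $|\nabla\psi|^2|D^2\psi|^2\geq|\nabla\tfrac12|\nabla\psi|^2|^2$, this yields
$$
\Delta P_0+b\cdot\nabla P_0\geq -C(1+|\nabla\psi|)
$$
wherever $\nabla\psi\neq0$. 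The coefficient $b$ is $\sim\nabla\psi/|\nabla\psi|^2$, and $C$ depends only on $C_1$ and $M$, because $|\omega|,|\omega_y|\leq C(1+|\psi|+|y|)$ by (\ref{F11a}) and $|y|$ is bounded in terms of $d$ and $R$.

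The $|\nabla\psi|$ on the right-hand side has to be absorbed. To do so, replace $P_0$ by $P=P_0+Ky$ with $K$ large (or by $e^{Ky}$-type modifications). This makes $P$ a subsolution, after which the maximum principle on a period cell, using periodicity in $x$, places the maximum of $P$ on $S_\xi$ or on $B$.

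\emph{Step 3: the bottom.} On $B$, where $\psi=-p_0$ is constant, local boundary Schauder/gradient estimates for $\Delta\psi=-\omega$ in a strip of width comparable to $m$ bound $|\nabla\psi|$. The constant depends on $M$, on $\sup|\omega|$, and on the distance to the top, which is $\geq m$. Near $S_\xi$ one can alternatively use the fact that the Lipschitz constant $L$ of the boundary controls the barrier functions, but Step 1 already controls the top.

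Combining Steps 1--3 gives $|\nabla\psi|^2\leq C_*(C_1,L,M,m)$, with the dependence on $R$ absorbed through $m$, $M$ and (\ref{F11a}).

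\emph{Main obstacle.} The hardest part is Step 2: the sign-indefinite terms $\omega_p|\nabla\psi|^2$ and $\omega_y\psi_y$ must be controlled without a smallness assumption, and degeneracy at critical points of $\psi$ must be handled. If the subsolution argument fails, the fallback is purely elliptic. Flatten the domain (or work in $\hat{x},\hat{y}$ variables), whose boundary Lipschitz constant is $L$. Apply interior gradient estimates for $\Delta\psi=-\omega$ with $|\omega|\leq C(1+M)$. Near the free boundary, use a boundary gradient estimate for the Dirichlet problem $\psi=0$ in a Lipschitz domain via barriers built from cones of aperture determined by $L$. This gives a bound depending only on $C_1,L,M,m$, which also explains why $L$ appears in $C_*$.
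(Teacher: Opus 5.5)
\emph{Where the argument breaks.} Your boundary steps are fine and agree with the paper. Bernoulli gives $|\nabla\psi|^2\le 2R$ on $S_\xi$, and flat-boundary local estimates control $\nabla\psi$ near $B$. The gap is the passage from the boundary into the interior. The $P$-function argument of Step 2 is the classical one for $\omega=\omega(\psi)$, and it would work if $\rho$ were constant. Here, however, $\omega_y=-g\rho'(-\psi)\neq0$, and the resulting bad term cannot be absorbed by $Ky$: since $\Delta(Ky)=0$, adding $Ky$ only changes lower-order terms. Concretely, take $\Theta(y,p)=\int_0^p\omega(y,t)\,dt$ (so $\Theta_{yy}=0$) and $P=\frac12|\nabla\psi|^2+\Theta(y,\psi)+Ky$. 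The two-dimensional identity for $|D^2\psi|^2|\nabla\psi|^2$ then yields, at an interior maximum with $\nabla\psi\neq0$, only
\[
0\ \ge\ 2(\Theta_y+K)^2+2(\Theta_y+K)\,\omega\,\psi_y+|\nabla\psi|^2\,\omega_y\psi_y .
\]
This is automatically satisfied wherever $\omega_y\psi_y$ is negative and bounded away from zero and $|\nabla\psi|\gg K$. So it excludes nothing and no bound results. An $e^{Ky}$ weight, whose Laplacian is bounded, cannot dominate a term of size $|\nabla\psi|$ either.

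\emph{The fallback does not repair this.} Cone barriers in a domain with $L$-Lipschitz boundary give H\"older, not gradient, bounds for the Dirichlet problem. Below a trough $\xi\approx L|x|$ the fluid domain has a reentrant corner of opening $\theta>\pi$, and solutions vanishing on the boundary behave like $r^{\pi/\theta}$ there. Hence a gradient bound obtained this way necessarily depends on higher norms of $\xi$, not only on $C_1,L,M,m$. That is exactly the dependence needed later for compactness of ${\mathcal K}_{\delta,\varepsilon}\cap{\mathcal O}$. The fallback also discards the Bernoulli condition, which is the only information controlling $\nabla\psi$ on $S_\xi$.

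\emph{The missing idea, which the paper uses.} Differentiate (\ref{Okt5aa}) and work with the linear equations
\[
\Delta\psi_x+\omega_p(y,\psi)\psi_x=0,\qquad \Delta\psi_y+\omega_p(y,\psi)\psi_y=g\rho'(-\psi).
\]
Their coefficients and right-hand side are bounded by (\ref{F11a}), and you already control their boundary values: Bernoulli on $S_\xi$, and on $B$ the identity $\psi_x=0$ together with the local estimates.

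What you must add is an integral bound $\int_{\Omega_s}|\nabla\psi|^2\,dxdy\le C(R,C_1,M,L)$ on the unit cells $\Omega_s=\{s<x<s+1\}$, uniform in $s$. The paper imports it from earlier work. It comes from testing the equation with a cut-off times $\psi+p_0$: the bottom term vanishes, and the term on $S_\xi$ is controlled by Bernoulli and the arc length, which is where $L$ enters.

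The local maximum principle up to the boundary (Theorem 8.25 in Gilbarg--Trudinger) then converts the boundary bounds plus this $L^2$ bound into a pointwise bound for $\psi_x$ and $\psi_y$. It needs no sign on $\omega_p$ and no regularity of $S_\xi$, because the subsolution is simply extended by its boundary supremum. This is why only $C_1$, $L$, $M$, $m$ (and the fixed $R$, $p_0$) enter the constant.
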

The proof of this assertion repeats the proof of Proposition 2 in \cite{KLN17}. The only difference is that
we can not estimate $m$ from below by a constant depending only on $R$, $C_1$, $L$ and $M$. We overcome this by including the dependence of $C_*$ on $m$ in the final estimate.
\begin{proof}
Denote
$$
\Omega_s=\{(x,y)\in D_{\xi}\,:\,s<x<s+1\}.
$$
%$$
%\Omega=\{(x,y)\in D_{\xi}\,:\,-\Lambda/2<x<\Lambda/2\}.
%$$
%Then
%\begin{eqnarray*}
%&&0=\int_{\Omega} (\psi+p_0)(\Delta \psi+\omega(\psi) dxdy=-\int_{\Omega} (|\nabla\psi|^2-\omega(y,\psi)(\psi+p_0))dxdy\\
%&&+\int_{-\Lambda/2}^{\Lambda/2}(-\xi'(x)\psi_x+\psi_y) (\psi+p_0)dx.
%\end{eqnarray*}
%Therefore
%$$
%\int_{\Omega} |\nabla\psi|^2 dxdy=\int_{\Omega} \omega(y,\psi)(\psi+p_0))dxdy+\int_{-\Lambda/2}^{\Lambda/2}(-\xi'(x)\psi_x+\psi_y)dx,
%$$
%where we have used the boundary conditions for $\psi$ and the condition (\ref{F11a}) on $\omega$.
%This implies
Repeating the proof of (4.12) in Proposition 2 in \cite{KLN17}, we get
$$
\int_{\Omega_s} |\nabla\psi|^2 dxdy\leq C(R,C_1,M,L),
$$
where $C$ does not depend on $s$. 

Differentiating the equation (\ref{Okt5aa}) with respect to $x$, we obtain
\begin{equation*}
\Delta\psi_x+\omega_p(y,\psi)\psi_x=0\;\;\mbox{in $D$}.
\end{equation*}
Applying Theorem 8.25 in \cite{GT}, we arrive at the estimate (\ref{Ma27aa}) for the function $\psi_x$.
Differentiating the equation (\ref{Okt5aa}) with respect to $y$, we obtain
\begin{equation*}
\Delta\psi_y+\omega_p(y,\psi)\psi_y=g\rho'(\psi)\;\;\mbox{in $D$}.
\end{equation*}
The value of $\psi$ on the bottom $y=-d$ is estimated already by a constant depending on $C_1$, $M$, $m$ and $p_0$.
The pointwise estimate of $\psi_y$ on the bottom can be obtained by applying standard local estimates near the bottom for the Laplace operator. It is estimated by a constant depending on $C_1$, $M$, $m$ and $p_0$.
The value of $\psi_y$ on the free surface $y=\xi$ is estimated by using the boundary condition (\ref{Ma13bc}). Now application of Theorem 8.25 in \cite{GT}
leads to the estimate (\ref{Ma27aa}) for the function $\psi_y$.

\end{proof}

\bigskip
{\bf Proof of Theorem \ref{ThJ3a}}. In order to satisfy the assumptions of the theorem, we will check that

(i) the Frechet derivative at each point $(w,\eta;\lambda)$ in ${\mathcal O}$ is a Fredholm operator of index zero;

(ii) Let
$$
{\mathcal K}_{\delta,\varepsilon}=\overline{{\mathcal U}_{\delta,\varepsilon}},
$$
where ${\mathcal U}_{\delta,\varepsilon}$ is defined by (\ref{Ma30a}).
The set ${\mathcal K}_{\delta,\varepsilon}\cap {\mathcal O}$ is compact in ${\mathcal X}$;

(iii) the alternative $(\beta)$ in (d) does not hold.

\medskip
Let us prove (i). We shall use the following lemma, proved in \cite{KPOMI}, where we applied a slightly different change of variables. Let
$$
\hat{x}=x,\;\;\;\hat{y}=\frac{(y+d)}{\xi(x)+d}
$$
and
let us introduce the transformation
\begin{equation}\label{Au2az}
v(x,y)=u(\hat{x},\hat{y})-\psi_{y}(x,y)\frac{(y+d)\zeta}{\xi+d}.
\end{equation}

\begin{lemma} (i) Assume that the functions $\psi$ and $\xi$ satisfy (\ref{Okt5aa})
in the domain $D_\xi$. If the function  $v$ is given
by (\ref{Au2az}) then
$$
(\partial_x^2+\partial_y^2)v+\omega_*v={\mathcal F}(u,\zeta),
$$
where $\omega_*$ is defined by (\ref{A16a}).
%$$
%\Psi(X,Y)=\psi\Big(X,\frac{dY}{\xi}\Big).
%$$

(ii) Furthermore
$$
\psi_xv_x+\psi_yv_y+\hat{\sigma}\zeta={\mathcal G}(u,\zeta)\;\;
\mbox{on ${\mathcal S}_\xi$},
$$
where
$$
\hat{\sigma}=\psi_x\psi_{xy}+\psi_y\psi_{yy}+g\rho(0),\;\;v=-\psi\,\zeta.
$$

\end{lemma}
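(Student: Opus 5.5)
The plan is to avoid expanding the long formulas (\ref{Ju28a}), (\ref{Ju28aa}) term by term. Instead I will read the Fr\'echet derivative as the $t$-derivative of a family of physical stream functions on moving domains. For small $t$ put $\xi_t=\xi+t\zeta$ and define $\psi_t$ on $D_{\xi_t}$ by
$$
\psi_t(x,y)=(\hat\psi+tu)\Big(x,\frac{y+d}{\xi_t(x)+d}\Big),
$$
so that $\psi_0=\psi$. By construction, $F(\hat\psi+tu,\xi_t)$ at the point $(\hat x,\hat y)$ equals $\Delta\psi_t+\omega(y,\psi_t)$ at the physical point $y=\hat y(\xi_t(x)+d)-d$. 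Similarly, $G(\hat\psi+tu,\xi_t)$ equals $\frac12|\nabla\psi_t|^2+g\rho(0)(\xi_t+d)-R$ on $y=\xi_t(x)$.

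The first step is to identify $v$. Differentiating $\psi_t$ in $t$ at a fixed physical point $(x,y)$ gives
$$
\partial_t\psi_t|_{t=0}=u(\hat x,\hat y)-\hat\psi_{\hat y}\frac{(y+d)\zeta}{(\xi+d)^2}=u-\psi_y\frac{(y+d)\zeta}{\xi+d},
$$
which is exactly $v$ from (\ref{Au2az}). On the other hand, differentiating at fixed $(\hat x,\hat y)$ is the derivative at fixed $(x,y)$ plus the transport term $\partial_y(\cdot)\,\partial_t y$, where $\partial_t y=\hat y\zeta$.

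For (i) I apply this to the quantity $\Delta\psi_t+\omega(y,\psi_t)$. At $t=0$ this quantity vanishes identically in $D_\xi$, with $\omega$ depending explicitly on $y$. Hence its $y$-derivative also vanishes, and the transport term drops out. What remains is the fixed-point derivative $\Delta v+\omega_p(y,\psi)v$, which is ${\mathcal F}(u,\zeta)$; here $\omega_p=\omega_*$ by (\ref{A16a}).

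For (ii) I differentiate the Bernoulli expression at $\hat y=1$. The fixed-point derivative is $\psi_xv_x+\psi_yv_y+g\rho(0)\zeta$. The transport term, with $\partial_t y=\zeta$ on the surface, is $\partial_y(\frac12|\nabla\psi|^2)\zeta=(\psi_x\psi_{xy}+\psi_y\psi_{yy})\zeta$. Adding these gives $\psi_xv_x+\psi_yv_y+\hat\sigma\zeta={\mathcal G}(u,\zeta)$. Since $u=0$ at $\hat y=1$, the surface value is $v=-\psi_y\zeta$, which is how I read the printed ``$v=-\psi\,\zeta$''.

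The main obstacle is justifying these differentiations. The step ``$\partial_y(\Delta\psi+\omega)=0$'' needs third derivatives of $\psi$, and the transport argument needs $\psi$ defined slightly beyond $D_\xi$. I would first use interior elliptic regularity: since $\omega$ is $C^{1,\alpha}$ in $\psi$, $\psi\in C^{3,\alpha}$ inside $D_\xi$, so the identity holds pointwise in the interior. For the boundary identity (ii) only first and second derivatives of $\psi$ up to $S_\xi$ enter. There I would extend $\psi$ in $C^{2,\alpha}$ across $S_\xi$, or simply check the formula by direct differentiation of (\ref{Ju28aa}) as in Lemma \ref{LA17a}(ii). If the regularity route becomes awkward, the fallback is the brute-force computation mirroring Lemma \ref{LA17a}(i), now with $x$-dependent $\xi$. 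That computation uses the derivatives of $\Delta\psi+\omega=0$ only as algebraic consequences of the equation, which is where the cancellation of the $\zeta$-terms comes from.
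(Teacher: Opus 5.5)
Your argument is correct, but it is organized differently from the paper's. The paper gives no proof of this lemma; it is quoted from \cite{KPOMI}. The in-paper model is the laminar Lemma \ref{LA17a}, proved by brute force. One substitutes (\ref{Au2a}) into $\Delta+\omega_*$, expands $\Delta$ of the product $\Psi_y(y+d)\zeta/d$, and cancels the $\zeta$-terms with the $y$-differentiated equation $(\partial_y^2+\omega_*)\Psi_y+\omega_y=0$. What remains is matched against the explicit formulas (\ref{Ju28az}), (\ref{Ju28aaz}).

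You instead identify $v$ as the Eulerian (fixed physical point) derivative of the family $\psi_t$, and ${\mathcal F},{\mathcal G}$ as Lagrangian (fixed $(\hat x,\hat y)$) derivatives. They then differ by the transport term $\hat y\zeta\,\partial_y(\cdot)$.
\begin{itemize}
\item In (i) that term is $\hat y\zeta\,\partial_y(\Delta\psi+\omega(y,\psi))$. It vanishes by exactly the identity $\Delta\psi_y+\omega_*\psi_y+\omega_y=0$ behind the direct cancellation, so the underlying mechanism is the same.
\item In (ii) the transport of $\frac12|\nabla\psi|^2$ is what produces the term $(\psi_x\psi_{xy}+\psi_y\psi_{yy})\zeta$ in $\hat\sigma$. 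Your reading $v=-\psi_y\zeta$ on $S_\xi$ is the correct one (compare (\ref{A14a})).
\end{itemize}

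What your route buys: it uses only the definition of ${\mathcal F},{\mathcal G}$ as $t$-derivatives. So it does not depend on the long formulas (\ref{Ju28a}), (\ref{Ju28aa}), whose term-by-term check with $x$-dependent $\xi$ is much heavier than in Lemma \ref{LA17a}. It also explains where (\ref{Au2az}) comes from. The price is the regularity bookkeeping you flagged, which the direct computation needs anyway, since it uses $\Delta\psi_y$.

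To make that bookkeeping clean, note two things. First, $\hat\psi$ need not be $C^3$, because its third $\hat x$-derivatives involve $\xi'''$. Second, $u$ is only $C^{2,\alpha}$. So write $\psi_t(x,y)=\psi(x,Y_t)+t\,u\big(x,\frac{y+d}{\xi_t+d}\big)$ with $Y_t=(y+d)\frac{\xi+d}{\xi_t+d}-d$.
\begin{itemize}
\item The first term needs only $\psi\in C^{3,\alpha}_{\mathrm{loc}}$ and second derivatives of $\xi,\zeta$.
\item The second term carries the factor $t$, so it contributes at $t=0$ by continuity alone and produces no transport term.
\end{itemize}
The resulting identity extends from $D_\xi$ to $\overline{D_\xi}$. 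This is because third derivatives of $\psi$ enter $\Delta v$ only through $\Delta\psi_y=-\omega_y-\omega_*\psi_y$.
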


By this lemma, the Fredholm property of the operator ${\mathcal F}$  follows from similar property for the operator
\begin{eqnarray}\label{Se28b}
&&A v:=(\partial_x^2+\partial_y^2) v+gy\rho^{''}(-\psi)v-\beta' (\psi)v\;\;
\mbox{in $D$},\nonumber\\
&&B v:=\psi_xv_x+\psi_yv_y-\frac{\hat{\sigma}}{\psi}v\;\;
\mbox{for $y=\xi(x)$}
\end{eqnarray}
defined on functions subject to
\begin{equation}\label{Se28ad}
v(x,-d)=0.
\end{equation}
Since this operator is self-adjoint, we obtain the required property.

\medskip
(ii) In order to prove the compactness property, we need some lemmas. Let $\psi\in\overline{{\mathcal U}_{\delta,\varepsilon}}$. We apply the transformation
$$
q=x,\;\;p=-\psi(x,y),\;\;(x,y)\in \{(x,y)\,:\,x\in\Bbb R,\;y_*(x)\leq y\leq \xi(x)\},
$$
where $y_*(x)$ was introduced in the definition \ref{Ma30a}.
Then
$$
(q,p)\in\overline{\hat{Q}},\;\;\hat{Q}=\{q\in\Bbb R,\;p\in (-\varepsilon,0)\}.
$$
We put
$$
h(q,p)=\varepsilon+y.
$$
Then
$$
\psi_y=-\frac{1}{h_p},\;\;\psi_x=\frac{h_q}{h_p}.
$$
Applying this change of variables, we arrive at
\begin{eqnarray}\label{J17aa}
&&F(h):=\Big(\frac{1+h_q^2}{2h_p^2}\Big)_p-\Big(\frac{h_q}{h_p}\Big)_q
+g(h-\varepsilon)\rho_p
-\beta(-p)=0\;\;
\mbox{in $\hat{Q}$},\nonumber\\
&&G(h):=\frac{1+h_q^2}{2h_p^2}+gh\rho-R=0\;\;\mbox{for $p=0$},\nonumber\\
&&h(q,p_0)=0\;\;q\in\Bbb R.
\end{eqnarray}
In the next lemma we use the notation $B_\rho(q,p)$ for the disc with the center at $(q,p)$ and with the radius $\rho$.

\begin{lemma}\label{LMa31a} Let $h\in C^{2,\alpha}(\overline{\hat{Q}})$  be a solution to (\ref{J17aa}) and $\delta$ and $\varepsilon$ satisfy (\ref{Ma30a}).
Suppose that the corresponding $(\psi,\eta)$ satisfies (\ref{J18aa})--(\ref{J18ac}) and the constants $M$, $m$ and $L$ are defined by (\ref{A29a}), (\ref{Ma27a}) and (\ref{A29aa}). Let $p\in [-\varepsilon/2,0]$ and $\rho\leq\varepsilon/2$.
Then there exists constants $C_*$ and $\alpha_1\in (0,1)$  depending also on $C_1$, $L$, $M$, $m$, $\delta$ and $\rho$ such that
$$
||h||_{C^{3,\alpha_1}(\hat{Q}\cap B_{\rho/2}((q,p))}\leq C_*.
$$
\end{lemma}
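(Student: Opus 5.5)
The plan is to treat (\ref{J17aa}) as a quasilinear elliptic problem with a nonlinear oblique (Bernoulli) boundary condition on $p=0$. First we obtain uniform $C^1$ bounds for $h$ in the strip $-\varepsilon/2\le p\le 0$, then bootstrap to $C^{3,\alpha_1}$ by standard Schauder-type theory. The $C^1$ control comes from Lemma \ref{LA29a} together with the definition of ${\mathcal U}_{\delta,\varepsilon}$. In the region $y_*(x)\le y\le\xi(x)$ we have $-\psi_y\ge\delta$, so $h_p=-1/\psi_y$ satisfies $0<h_p\le\delta^{-1}$. Lemma \ref{LA29a} gives $|\nabla\psi|^2\le C_*(C_1,L,M,m)$, hence $h_p\ge C_*^{-1/2}$, and $|h_q|=|\psi_x|\,h_p\le C_*^{1/2}\delta^{-1}$. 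Also $h$ itself is bounded by $M$ and $m$-type constants, since $y\in[-d,\xi]$. So $h$ is uniformly Lipschitz and $h_p$ is bounded above and below.

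These bounds make the equation uniformly elliptic. Writing $F(h)=0$ in divergence form, the principal part $\big(\frac{1+h_q^2}{2h_p^2}\big)_p-\big(\frac{h_q}{h_p}\big)_q$ has coefficients depending on $\nabla h$ only, and its symbol has ellipticity constants controlled by the bounds on $h_p^{-1}$ and $h_q$. The lower order terms $g(h-\varepsilon)\rho_p-\beta(-p)$ are bounded by (\ref{F11a}). The boundary condition $G(h)=0$ on $p=0$ is a nonlinear first order condition. Its linearization in $\nabla h$ has normal component $-(1+h_q^2)h_p^{-3}\ne0$, so it is uniformly oblique under the above bounds.

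Next we apply De Giorgi--Nash type estimates to obtain $C^{1,\alpha_1}$ bounds. We use them in the form of Lieberman's theory for quasilinear equations with nonlinear oblique conditions (or, in the interior, Theorem 8.24 in \cite{GT} applied to the equations for $h_q$ and $h_p$). Concretely, differentiating in $q$ gives a linear divergence-form equation for $w=h_q$ with bounded measurable coefficients. The differentiated boundary condition is a conormal-type condition, so Hölder estimates for $h_q$ follow up to $p=0$ in $B_{\rho}((q,p))$, using $\rho\le\varepsilon/2$ to stay away from $p=-\varepsilon$. The $p$-derivative is then recovered from the equation, or from the boundary condition at $p=0$. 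Once $h\in C^{1,\alpha_1}$ with controlled norm, the coefficients are Hölder. Schauder estimates for oblique problems (\cite{GT}, Ch. 6) then give $C^{2,\alpha_1}$. Differentiating once more in $q$ and reapplying Schauder, with the $C^{2,\alpha}$ regularity of $\rho$ and $C^{1,\alpha}$ of $\beta$ supplying the needed smoothness of the lower-order terms, gives the $C^{3,\alpha_1}$ bound on $B_{\rho/2}$. Each step shrinks the ball, and the final constant depends on $C_1,L,M,m,\delta,\rho$.

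The main obstacle is the $C^{1,\alpha_1}$ estimate up to the free boundary $p=0$. There the condition is nonlinear and involves $h$ itself through $gh\rho$. I would handle this by differentiating $G(h)=0$ tangentially in $q$. This yields a linear oblique condition for $h_q$ whose coefficients are bounded by the Lipschitz bounds, after which the Hölder estimate of Lieberman for oblique derivative problems applies. I also have to check that the lower bound $\delta$ on $-\psi_y$ is what keeps both the ellipticity and the obliqueness uniform; this is exactly why the constants depend on $\delta$.
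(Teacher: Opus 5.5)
Your proposal is correct and follows the same route as the paper. The bounds $h_q^2+h_p^2\le C_*\delta^{-2}$ and $h_p^2\ge C_*^{-1}$, which come from Lemma \ref{LA29a} and $-\psi_y\ge\delta$, make the problem uniformly elliptic with a uniformly oblique (in fact conormal) Bernoulli condition; the local $C^{1,\alpha}$ estimates of \cite{LU}, Ch.~10, Sect.~2 (which you reproduce via tangential differentiation and De Giorgi--Nash) followed by a Schauder bootstrap then give the $C^{3,\alpha_1}$ bound. The one step to tighten is $C^{1,\alpha_1}\to C^{2,\alpha_1}$: do it through the divergence-form conormal problem for $h_q$, as in your final step, rather than the linear oblique estimates of \cite{GT}, Ch.~6, which need $C^{1,\alpha}$ boundary coefficients that the nonlinear condition does not yet supply.
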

The proof of this assertion is the same as the proof of Proposition 3 in \cite{KLN17}, where the results from \cite{GT}, \cite{LU} and \cite{ADN} are used. The important role plays by the inequality (\ref{Ma27aa}) in Lemma \ref{LA29a}  and its analog in $(q,p)$ variables:
\begin{equation}\label{Jun5a}
h_q^2+h_p^2\leq C_*\delta^{-2},\;\;h_p^2\geq C_*^{-1},
\end{equation}
where we have used also that $|\psi_y|\geq\delta$.  The main step here  is the local estimates  presented and proved in Sect. 2, Chapter 10, \cite{LU}, and due (\ref{Jun5a}) the conditions required for the local estimate are satisfied.

\begin{lemma}\label{LA29b} For each $\delta>0$ and $\varepsilon>0$ the set
\begin{equation}\label{MA1b}
{\mathcal K}_{\delta,\varepsilon}\cap {\mathcal O}
\end{equation}
is compact in ${\mathcal X}$.
\end{lemma}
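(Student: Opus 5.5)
Take a sequence $(\psi_j,\xi_j,\Lambda_j)$ in ${\mathcal K}_{\delta,\varepsilon}\cap{\mathcal O}$; it suffices to extract a subsequence converging in ${\mathcal X}$ (after flattening, i.e. in $C^{2,\alpha}_{0,e,\Lambda_*}(\overline Q)\times C^{2,\alpha}_{e,\Lambda_*}(\Bbb R)\times\Bbb R_+$) to an element of the same set. Since $\delta\le\Lambda_j\le\delta^{-1}$, we pass to a subsequence with $\Lambda_j\to\Lambda$, and hence $\lambda_j\to\lambda=\Lambda_*/\Lambda\in(0,\infty)$. The bounds (\ref{A30a}), (\ref{A30aa}), (\ref{A30ad}) give uniform control of $M$, $L$ and $m$. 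Lemma \ref{LA29a} therefore yields a uniform gradient bound $|\nabla\psi_j|^2\le C_*$ on $\overline{D_{\xi_j}}$. The plan is to upgrade this to a uniform bound in a higher H\"older norm, say $C^{3,\alpha_1}$ or at least $C^{2,\alpha'}$ with $\alpha'>\alpha$. Arzel\`a--Ascoli then gives convergence in $C^{2,\alpha}$, and closedness of the conditions defining ${\mathcal K}_{\delta,\varepsilon}$ and ${\mathcal O}$ (nonstrict inequalities, and equations passing to the limit) finishes the proof.

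\textbf{Near the free surface.} Use the semi-hodograph variables $(q,p)$ on the strip $y_*(x)\le y\le\xi(x)$, which has width at least $\varepsilon$ in $p$ because $\varepsilon_*(\psi_j)\ge\varepsilon$. Lemma \ref{LMa31a}, covering the strip $p\in[-\varepsilon/2,0]$ by discs of radius $\rho=\varepsilon/4$, gives a uniform bound $\|h_j\|_{C^{3,\alpha_1}}\le C_*$ there. Since $h_p=-1/\psi_y$ is bounded and bounded away from zero by (\ref{Jun5a}) and (\ref{Ma1a}), inverting the transformation gives uniform $C^{3,\alpha_1}$ bounds on $\xi_j$ and on $\psi_j$ in a neighbourhood of $S_{\xi_j}$ of uniform thickness. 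In particular $\eta_j=\xi_j(\lambda_j^{-1}\hat x)$ is bounded in $C^{3,\alpha_1}$, which is compactly embedded in $C^{2,\alpha}$ if $\alpha_1>\alpha$. If $\alpha_1\le\alpha$, we would instead bootstrap using the Bernoulli condition (\ref{J17aa}) as an oblique boundary condition together with Schauder estimates from \cite{ADN} to gain regularity beyond $C^{2,\alpha}$.

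\textbf{Interior and bottom.} Away from the surface, $\psi_j$ solves the semilinear equation $\Delta\psi+\omega(y,\psi)=0$ with $\psi=-p_0$ on the flat bottom. With $\psi_j$ and $\nabla\psi_j$ bounded, the right-hand side is bounded in $C^{0,1}$, so interior and boundary Schauder estimates (\cite{GT}, Thm 6.2 and Lemma 6.18) give $C^{2,\beta}$ bounds for any $\beta<1$. Taking $\beta>\alpha$, these estimates combine with the surface estimate in the flattened variables, whose coefficients are controlled since $\eta_j$ is bounded in $C^{3,\alpha_1}$ and $\eta_j+d\ge\delta$. The result is a bound on $\hat\psi_j$ in $C^{2,\beta}(\overline Q)$ on one period. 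A subsequence then converges in $C^{2,\alpha}$ to $(\hat\psi,\eta,\lambda)$, which solves (\ref{K2aa}) and satisfies all inequalities (\ref{A30aaa})--(\ref{A30ad}) with $\delta$, together with $\varepsilon_*(\psi)\ge\varepsilon$. For the last condition, note that $\psi_y$ converges uniformly, so $y_*$ and $\psi(x,y_*(x))$ behave semicontinuously in the right direction.

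\textbf{Main obstacle.} The hardest step is the estimate near the free surface. The quantity $\psi_y$ may vanish in the interior, which is why the hodograph transform can only be applied in the strip above $y_*$, and this is exactly where the lower bound $\varepsilon$ is needed. Making the matching between the hodograph region and the Schauder region uniform, and checking that $\varepsilon_*$ passes to the limit, requires care. Here I would rely on (\ref{Jun14aa}) to obtain a uniform lower bound on the physical thickness $\xi-y_*$ of the strip.
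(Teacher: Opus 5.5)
Your plan follows the paper's argument and is correct: uniform higher Hölder bounds ($C^{3,\alpha_1}$ in the paper, $C^{2,\beta}$ with $\beta>\alpha$ in your interior step) come from the gradient bound of Lemma~\ref{LA29a}, the semi-hodograph estimate of Lemma~\ref{LMa31a} near the surface and Schauder/ADN estimates elsewhere, and compactness then follows from the compact embedding into $C^{2,\alpha}$. Three side remarks are off but harmless: $C^{3,\alpha_1}\hookrightarrow C^{2,\alpha}$ is compact for every $\alpha_1$, so no bootstrap is needed; the limit lies in ${\mathcal K}_{\delta,\varepsilon}$ simply because that set is a closure, and your semicontinuity claim for $\varepsilon_*$ actually goes the wrong way, since a limit in which $-\psi_y$ touches $\delta$ pushes $y_*$ up and lowers $\varepsilon_*$; and the lower bound on the physical thickness of the strip comes from $|\psi_y|\le C_*^{1/2}$ rather than from (\ref{Jun14aa}), which bounds $\xi-y_*$ from above.
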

\begin{proof} First we verify the following property.
Let $(\psi,\xi,\Lambda)$  satisfy (\ref{Ma13ba})--(\ref{Ma13bc}). Then
 there exists constants $C_*$ and $\alpha_1\in (0,1)$  depending  on $C_1$, $L$, $M$, $m$,  $\delta$ and $\varepsilon$ such that
\begin{equation}\label{Jun5aa}
||\psi||_{C^{3,\alpha_1}(\overline{D}_\xi)}\leq C_*,\;\;||\xi||_{C^{3,\alpha_1}(\Bbb R)}\leq C_*.
\end{equation}
%$$
%||\psi||_{C^{3,\alpha_1}(D_\xi\cap B_{\rho/2}((X,Y))}\leq C_*,\;\;||\xi||_{C^{3,\alpha_1}(S_\xi\cap B_{\rho/2}((X,Y))}\leq C_*.
%$$
The proof of this inequality (proved in Proposition 3.2 \cite{KL1}) follows from the lemma \ref{LMa31a} and the local estimates  from \cite{ADN}. Now the  required compactness property
in the space ${\mathcal X}$ follows from
(\ref{Jun5aa}) and compactness of inclusion of the H\"older spaces:
$$
{\mathcal K}_{\delta,\varepsilon}\cap {\mathcal O}\in C^{3,\alpha_1}(\overline{D_\xi})\times C^{3,\alpha_1}(\Bbb R),
$$
which leads to compactness of (\ref{MA1b}).
\end{proof}

To derive assertions (a) and (b) from Theorem \ref{TMa2a} and similar assertions from Theorem \ref{TMa2b} we proceed as follows.

(a) Assume that there exists $\varepsilon_*>0$  such all element of the branch belongs to $\widehat{\mathcal U}_{\varepsilon_*}$. Then we can apply Theorem \ref{ThJ3a}
for ${\mathcal K}_\delta=\overline{{\mathcal U}_{\delta,\varepsilon_*}}$ and ${\mathcal U}=\widehat{\mathcal U}_{\varepsilon_*}$, which leads to the proof of (a).

(b) Assume that (a) is not true. Then for every $\varepsilon$ there exists $\delta$ and
 an element $(\psi[t],\xi[t],\Lambda(t))\in {\mathcal U}_\delta$ on the branch such that
\begin{equation}\label{Jun15a}
 \varepsilon_*(\psi[t])<\varepsilon.
\end{equation}
One can choose $\varepsilon_j$ satisfying $\varepsilon_j\to 0$ as $j\to\infty$ and $(\psi[t_j],\xi[t_j],\Lambda(t_j))$  in such a way that $t_j\to\infty$ as $j\to\infty$ and the inequality (\ref{Jun15a}) still holds, when you include there the dependence on $j$. This proves the item (b) in Theorem \ref{TMa2a}.

%it does not belong to ${\mathcal U}_{\delta,\varepsilon}$. This means that %$\psi[t](x_*,y_*)<\varepsilon$
% and therefore there exist an element of the branch $(\psi[t],\eta[t],\lambda(t))$, %which do not belong to ${\mathcal U}_{\delta,\varepsilon}$. Now we choose  %sequences of $\varepsilon_j$ and $\delta_j$ tending to zero as $j\to\infty$ and %having the above property. We put
% \begin{equation*}
% \widehat{K}_j=\overline{{\mathcal U}_{\delta_j,\varepsilon_j}}.
% \end{equation*}
% Then
% $$
% \cup_j\widehat{K}_j={\mathcal U}
% $$
% and according Theorem \ref{ThJ3a} for every $j$ there exists $\hat{t}_j$ such that %the branch for $t>\hat{t}_j$ does not belong to $\widehat{K}_j$. Now we can choose %a sequence $\{t_j\}$ satisfying $t_j>\hat{t}_j$ and $t_j\to \infty$ as %$j\to\infty$. Now we take the function $(\psi_j[t_j],\xi[t_j],\Lambda(t_j))$ which %does not belong $\widehat{K}_j$. Then these functions satisfy the relation %(\ref{Jun7b}).
%This proves the item (b) in Theorem \ref{TMa2a}.

(iii) Let  $(\Psi_*(y),d_*)$ be the laminar solution  from Theorem \ref{TA17a} and let $\tau_*$ be the positive root of the dispersion equation (\ref{F21ac}). Then the solution
$\gamma(y;\tau_*)$ of the equation (\ref{F21ab}) is positive on the interval $(-d,0]$. By (\ref{A17acz}) the derivative $\psi_{\hat{x}}(\hat{x},y)$
is positive in $(0,\Lambda_*)\times [0,1)$. In the paper \cite{Koz1Loop} the density is a constant, but the basis of considerations there is Lemma 2.1 and the maximum principles, which can be found in Theorem 2.1, \cite{AN}, is still true. Therefore, all result there can be applied to this case also.

\section*{Acknowledgments}

This work was supported by the Ministry of Science and Higher Education of
the Russian Federation (agreement 075-15-2025-344 dated 29/04/2025 for
Saint Petersburg Leonard Euler International Mathematical Institute at PDMI RAS).

%\section{References}

{

\end{document}